\documentclass[12pt,a4paper]{article}

\usepackage[utf8]{inputenc}
\usepackage[T1]{fontenc}
\usepackage{lmodern}
\usepackage[english]{babel}
\usepackage{microtype}
\usepackage{geometry}
\usepackage{setspace}
\usepackage{amsmath,amssymb,amsfonts,mathtools,bm}
\usepackage{amsthm}
\usepackage{booktabs,longtable,array,threeparttable,tabularx}
\usepackage{graphicx}
\usepackage{caption}
\usepackage{subcaption}
\usepackage{xcolor}
\usepackage{enumitem}
\usepackage{natbib}
\usepackage{url}
\usepackage{xurl}
\usepackage{hyperref}
\usepackage[section]{placeins}

\hypersetup{colorlinks=true,linkcolor=blue!45!black,citecolor=blue!45!black,urlcolor=blue!45!black}

\newtheorem{proposition}{Proposition}
\newtheorem{corollary}{Corollary}

\theoremstyle{remark}
\newtheorem{remark}{Remark}

\newcommand{\dd}{\mathrm{d}}
\newcommand{\E}{\mathbb{E}}
\newcommand{\R}{\mathbb{R}}
\newcommand{\mc}[1]{\mathcal{#1}}
\newcommand{\pder}[2]{\frac{\partial #1}{\partial #2}}

\newcommand{\ones}{\bm{1}}
\newcommand{\gvec}{\bm{g}}
\newcommand{\wvec}{\bm{w}}

\newcommand{\rank}{\operatorname{rank}}
\newcommand{\kernel}{\operatorname{ker}}

\title{\textbf{Fiscal Aggregation and the Limits of IS--LM--BP: Derivations, Aggregation Bias and Reproducible Adversarial Simulations}}

\author{Ricardo Alonzo Fern\'andez Salguero\\
\small Corresponding author: \texttt{ricardoalonzo.fernandez@alu.uhu.es}}
\date{May 2026}

\begin{document}
\maketitle

\begin{abstract}
This paper develops a formal critique of scalar fiscal aggregation in the IS--LM--BP/Mundell--Fleming framework. The argument is not that the model is algebraically inconsistent: under its maintained assumptions, the model is a coherent comparative-static representation of goods-market, money-market and balance-of-payments equilibrium. The claim is narrower and more precise. If fiscal policy is a vector of heterogeneous instruments - current purchases, public investment projects and transfers to heterogeneous households - then the scalar aggregate \(G=\ones'\gvec\) is sufficient for output effects only under a restrictive gradient condition: \(\nabla_{\gvec}Y=\lambda\ones\). The paper states and proves this condition, derives the scalar multiplier as a composition-weighted object, identifies the corresponding aggregation bias, and embeds the result in an explicitly derived open-economy fiscal-composition extension of IS--LM--BP. The mathematical development includes the canonical IS, IS--LM, fixed-exchange-rate BP and flexible-exchange-rate BP multipliers; the null-space geometry of fiscal recompositions; heterogeneous-transfer multipliers; public-capital accumulation and its finite-horizon present-value effect; the debt-to-output response; and an open-economy risk-premium channel. A reproducible computational exercise then implements symbolic checks, finite-difference derivative tests, accounting-identity tests, adversarial counterexamples, sensitivity sweeps, 3,000 Monte Carlo draws and 500 extreme stress-test draws. All 42 tests pass. The exercise is not an empirical estimate of fiscal multipliers; it is a computational consistency and adversarial simulation exercise showing why scalar-\(G\) inference fails when instruments have heterogeneous marginal effects. The conclusion is methodological: IS--LM--BP remains useful as a low-dimensional consistency map, but fiscal policy design requires a vector of instruments and state-contingent multipliers rather than a single homogeneous public-spending variable.
\end{abstract}

\noindent\textbf{Keywords:} IS--LM--BP; Mundell--Fleming; fiscal multipliers; public investment; fiscal composition; open economy macroeconomics; aggregation bias; computational simulation.\\
\textbf{JEL codes:} E62; E63; F41; H54; C63.

\section{Introduction}

The IS--LM tradition begins with the attempt by \citet{hicks1937} to express Keynesian macroeconomics as a simultaneous-equilibrium system for the goods market and the money market. The open-economy extension associated with \citet{fleming1962} and \citet{mundell1963} added the balance-of-payments schedule and produced a compact set of results about fiscal policy, monetary policy, capital mobility and the exchange-rate regime. In the standard exposition, a fiscal expansion shifts the IS curve; the LM curve maps income and interest rates into money-market equilibrium; and the BP schedule records the external-balance condition. The resulting diagram is powerful precisely because it is low-dimensional.

The same low dimensionality becomes a limitation when the model is used to reason about fiscal composition. The scalar variable usually denoted by \(G\) is often sufficient for a classroom comparative static: public demand rises, the IS curve shifts, interest rates and external variables adjust. But a real fiscal package is not one object. It is a vector containing current purchases, wage bills, procurement, public investment in different sectors, transfers to households with different liquidity positions, subsidies with different import content and projects with different implementation quality. A scalar \(G\) can represent all of these instruments only if they have equal marginal effects on the outcome of interest. That is an exceptionally strong assumption.

This paper formalizes that point. Let \(\gvec\in\R^n\) denote the vector of fiscal instruments and let \(G=\ones'\gvec\) be the scalar aggregate. The first result proves that \(G\) is locally sufficient for output effects only if the fiscal gradient is proportional to the vector of ones:
\begin{equation}
\nabla_{\gvec}Y=\lambda\ones.
\end{equation}
Equivalently, all fiscal instruments must have the same marginal output effect. If current spending, public investment and transfers differ in domestic absorption, import leakage, consumption response, capital accumulation or financing effects, the condition fails.

The second result shows that an aggregate fiscal multiplier is not generally a structural parameter. If a marginal fiscal impulse has allocation vector \(\wvec\), with \(\ones'\wvec=1\), then
\begin{equation}
\frac{\dd Y}{\dd G}=\nabla_{\gvec}Y'\wvec.
\end{equation}
The multiplier is therefore an average over fiscal composition. It becomes invariant to \(\wvec\) only under the homogeneity condition \(\nabla_{\gvec}Y=\lambda\ones\). A multiplier estimated under one composition need not apply to another composition.

The third contribution is constructive. The paper does not only criticize scalar aggregation; it writes a minimal fiscal-composition extension of IS--LM--BP. The extension distinguishes current spending, investment projects and transfers to heterogeneous households. It introduces public capital accumulation, project implementation efficiency, import leakages, tax feedback, debt dynamics and a reduced-form external-finance risk channel. This model is deliberately stylized, but it is sufficiently rich to show why policy design requires instrument-specific multipliers.

The fourth contribution is computational. A one-cell replication script implements symbolic algebra, deterministic adversarial cases, accounting identities, finite-difference derivatives, sensitivity sweeps, Monte Carlo experiments and extreme stress tests. The tests are not empirical identification. They do not estimate fiscal multipliers for a real country. They verify the internal logic of the formal claim and demonstrate that equal aggregate spending can generate different output, debt, public-capital and external-balance paths when instruments differ. The novelty is not the generic observation that composition matters; the novelty is the exact condition under which composition can be ignored, the bias formula when it cannot, and the reproducible adversarial demonstration that the scalar model fails precisely when that condition is violated.

The paper is organized as follows. Section 2 reviews the relevant literature. Section 3 derives the canonical IS--LM--BP multipliers step by step and identifies where scalar fiscal aggregation enters. Section 4 proves the fiscal aggregation theorem, the composition-weighted multiplier result and the aggregation-bias formula. Section 5 develops the fiscal-composition model. Section 6 derives instrument-specific multipliers for current spending, transfers and public investment. Section 7 describes the numerical implementation and explains the rationale behind each class of computational tests. Section 8 reports results. Section 9 discusses implications and limitations. Appendices provide additional derivations and the full computational test table.

\section{Related literature}

The IS--LM--BP framework is a coherent low-dimensional model of short-run equilibrium. Hicks' original IS--LM construction reduced Keynesian theory to a simultaneous-equation representation of goods and money market equilibrium \citep{hicks1937}. Fleming and Mundell extended the framework to open economies and showed that stabilization policy depends on exchange-rate regimes and capital mobility \citep{fleming1962,mundell1963}. The canonical BP equation can be written as
\begin{equation}
BP=NX(Y,e,Y^*)+CF(i-i^*)=0,
\end{equation}
where \(NX\) is net exports and \(CF\) capital flows. This representation is analytically useful, but it was not designed to compare infrastructure projects, fiscal targeting or public-expenditure composition.

Subsequent macroeconomic work has modified several components of the classical model. \citet{romer2000} argues that the LM curve is less suitable for modern monetary policy analysis because central banks usually implement policy through interest-rate rules rather than monetary-aggregate targets. \citet{clarida1999} systematize the New Keynesian framework in which expectations, policy rules and nominal rigidities discipline monetary transmission. \citet{dornbusch1976} shows that exchange rates can overshoot when financial markets adjust faster than goods markets. These developments do not invalidate the basic model as a pedagogical device, but they show why the original apparatus should not be treated as a complete structural model.

The empirical fiscal-multiplier literature gives direct motivation for the present critique. \citet{blanchardperotti2002} use a structural VAR and event-study approach to characterize dynamic effects of spending and tax shocks in the United States. \citet{perotti2005} extends fiscal VAR analysis across OECD economies. \citet{mountforduhlig2009} develop a sign-restriction approach to fiscal shocks. \citet{nakamurasteinsson2014} exploit regional variation in military procurement to estimate relative multipliers within a monetary union. \citet{kraay2012,kraay2014} use the timing of World Bank-financed project disbursements to estimate multipliers in developing countries. These contributions differ in identification strategy, but all make clear that fiscal effects are empirical objects that depend on context, instrument definition and timing.

Cross-country and meta-analytic work reinforces the point. \citet{ilzetzki2013} show that fiscal multipliers vary with development status, exchange-rate regime, trade openness and public debt. \citet{blanchardleigh2013} show that forecast errors during fiscal consolidations were consistent with multipliers being larger than assumed in the post-crisis environment. \citet{auerbach2012} estimate state-dependent responses and find stronger effects in recessions, while \citet{rameyzubairy2018} caution that historical evidence on state dependence is sensitive to specification. \citet{ramey2019} surveys the post-crisis fiscal literature and emphasizes identification, instrument type and context. This literature is inconsistent with the idea of a single context-free fiscal multiplier.

Composition is central. \citet{gechert2015} finds in a meta-regression that expenditure multipliers are generally larger than tax and transfer multipliers, and that public investment tends to have particularly large effects. \citet{gechert2018} finds that regime dependence matters, especially in recessions. These findings support the claim that expenditure type matters. However, they do not imply that every public investment project dominates every current-spending program or every transfer. The public-capital literature stresses heterogeneity of returns. \citet{bom2014} find a positive average output elasticity of public capital but substantial dispersion. \citet{abiad2016} find that public investment raises output in advanced economies, particularly under favorable conditions. \citet{leeperwalkeryang2010} emphasize implementation delays and future fiscal adjustment as determinants of government-investment multipliers. The IMF infrastructure analysis similarly stresses slack and investment efficiency \citep{imf2014}.

Heterogeneity also matters for transfers. Micro evidence on stimulus payments shows that consumption responses differ across households \citep{parker2013}. \citet{kaplanviolante2014} model strong responses among households with little liquid wealth. HANK models show that aggregate transmission depends on the distribution of income, wealth and liquidity \citep{kaplan2018}; sequence-space methods make such models computationally tractable \citep{auclert2021}. A scalar transfer multiplier therefore hides the allocation of transfers across household types.

The external constraint is another reason scalar multipliers are fragile. \citet{calvo1998} formalizes sudden stops in capital flows. \citet{reinhartrogoff2004} and \citet{ilzetzkireinhartrogoff2019} show that official exchange-rate classifications may differ from de facto regimes. \citet{huidrom2020} show that fiscal multipliers are smaller when fiscal positions are weak, consistent with channels through risk premia and expectations. These findings are especially relevant to open and emerging economies, where the BP schedule can shift with risk, reserves, liquidity and exchange-rate expectations.

\section{What is known, what is new and the exact target of the paper}

The paper is deliberately narrow. It does not claim that fiscal composition is a new empirical discovery. The fiscal-multiplier literature has already shown that the size of multipliers depends on spending type, taxes, transfers, exchange-rate regime, openness, fiscal position and the state of the cycle. The paper also does not claim that IS--LM--BP is algebraically wrong. Under its maintained assumptions, the framework is internally coherent. The exact target is different: the paper asks when a scalar fiscal aggregate is mathematically sufficient for policy inference.

The contribution is therefore a theorem about aggregation, not an empirical claim about one particular country. If fiscal policy is a vector, \(\gvec\), then scalar public spending, \(G=\ones'\gvec\), is sufficient only under conditions that make all zero-sum recompositions irrelevant. In local form, this requires \(\nabla_{\gvec}F=\lambda\ones\). In directional form, the aggregate multiplier is \(\nabla_{\gvec}F'\wvec\), where \(\wvec\) is the composition of the marginal fiscal package. These results clarify why an empirical multiplier estimated from one type of fiscal shock may not transfer to another type of fiscal package.

\begin{table}[!htbp]
\centering
\caption{What is already known and what this paper adds}
\label{tab:knownnew}
\begin{threeparttable}
\scriptsize
\begin{tabularx}{\textwidth}{>{\raggedright\arraybackslash}p{0.19\textwidth}>{\raggedright\arraybackslash}X>{\raggedright\arraybackslash}X}
\toprule
\textbf{Issue} & \textbf{Established result in the literature} & \textbf{Contribution of this paper}\\
\midrule
IS--LM--BP and Mundell--Fleming & Fiscal effects depend on the exchange-rate regime, capital mobility and the monetary/external closure \citep{fleming1962,mundell1963}. & Derives the canonical closures step by step and identifies the exact point where scalar fiscal aggregation enters.\\
Fiscal multipliers & Multipliers vary across countries, regimes, debt positions, openness and identification strategies \citep{ilzetzki2013,ramey2019,huidrom2020}. & Shows that an aggregate multiplier is a directional derivative, \(\nabla_{\gvec}F'\wvec\), not a primitive structural constant.\\
Fiscal composition & Spending, investment, taxes and transfers can have different empirical multipliers \citep{gechert2015,gechert2018}. & Provides a necessary and sufficient aggregation condition: scalar \(G\) is locally sufficient only if \(\nabla_{\gvec}F=\lambda\ones\).\\
Public investment & Public capital can raise output, but effects depend on efficiency, delays and productivity \citep{bom2014,abiad2016,leeperwalkeryang2010}. & Derives the finite-horizon public-capital channel and the inequality under which investment dominates current spending.\\
Transfers and heterogeneity & Household liquidity and MPC heterogeneity affect transfer responses \citep{parker2013,kaplanviolante2014,kaplan2018}. & Expresses transfer multipliers as \(c_q(1-\mu_q)/D_t\), making targeting and import leakage explicit.\\
Computation & Simulation is often used to illustrate fiscal mechanisms. & Implements a reproducible adversarial battery: symbolic proofs, finite differences, identities, sensitivity sweeps, Monte Carlo and stress tests.\\
\bottomrule
\end{tabularx}
\begin{tablenotes}
\footnotesize
\item Notes: The novelty is not the generic claim that composition matters. The novelty is the exact aggregation condition under which composition can be ignored and the corresponding composition-bias formula when it cannot.
\end{tablenotes}
\end{threeparttable}
\end{table}

The table also clarifies the status of the numerical exercise. The simulations do not estimate real-world multipliers. They test internal consistency, construct adversarial counterexamples and illustrate the consequences of violating the aggregation condition. This distinction matters for journal interpretation: the manuscript is a formal methodological contribution with a reproducible numerical illustration, not an empirical macroeconomic identification exercise.

\section{Canonical IS--LM--BP multipliers and where aggregation enters}

This section derives the classical multipliers explicitly. The purpose is not to reject the canonical algebra; it is to show where the scalar fiscal instrument enters and why the scalar result is conditional on a strong aggregation assumption.

\subsection*{Goods-market block}

Consider the linear open-economy goods-market equation
\begin{equation}
Y=C_0+c(Y-T)+I_0-bi+G+X_0+\eta e-mY,
\label{eq:is_raw}
\end{equation}
with \(0<c<1\), \(b>0\), \(m\geq 0\), \(\eta>0\), and proportional taxes
\begin{equation}
T=T_0+tY,\qquad 0\leq t<1.
\end{equation}
Substituting taxes into consumption gives
\begin{align}
C_0+c(Y-T)&=C_0+c(Y-T_0-tY)\\
&=C_0-cT_0+c(1-t)Y.
\end{align}
Substituting into \eqref{eq:is_raw} yields
\begin{equation}
Y=C_0-cT_0+c(1-t)Y+I_0-bi+G+X_0+\eta e-mY.
\end{equation}
Collecting terms in \(Y\),
\begin{equation}
[1-c(1-t)+m]Y=A+G+\eta e-bi,
\label{eq:is_alpha}
\end{equation}
where
\begin{equation}
A=C_0-cT_0+I_0+X_0,
\qquad
\alpha=1-c(1-t)+m.
\end{equation}
The IS equation can be written as
\begin{equation}
\alpha Y=A+G+\eta e-bi.
\label{eq:is_compact}
\end{equation}
If \(i\) and \(e\) are held fixed, differentiating \eqref{eq:is_compact} gives
\begin{equation}
\alpha\dd Y=\dd G,
\end{equation}
so that
\begin{equation}
\left.\frac{\dd Y}{\dd G}\right|_{i,e}=\frac{1}{\alpha}=\frac{1}{1-c(1-t)+m}.
\label{eq:simple_multiplier}
\end{equation}
This is the simple open-economy Keynesian goods-market multiplier. It is not yet the full IS--LM--BP multiplier. It is obtained only after holding the interest rate and exchange rate fixed. It also treats \(G\) as a homogeneous scalar fiscal instrument.

\subsection*{Adding the LM block}

Let real money demand be
\begin{equation}
\frac{M}{P}=kY-hi,
\qquad k>0,\quad h>0.
\label{eq:lm}
\end{equation}
Totally differentiating \eqref{eq:is_compact} and \eqref{eq:lm} gives
\begin{align}
\alpha\dd Y+b\dd i-\eta\dd e&=\dd G,\label{eq:dis}\\
k\dd Y-h\dd i&=\dd(M/P).\label{eq:dlm}
\end{align}
If \(M/P\) and \(e\) are fixed, \(\dd(M/P)=0\) and \(\dd e=0\). From \eqref{eq:dlm},
\begin{equation}
\dd i=\frac{k}{h}\dd Y.
\end{equation}
Substituting into \eqref{eq:dis} yields
\begin{equation}
\left(\alpha+\frac{bk}{h}\right)\dd Y=\dd G,
\end{equation}
so the IS--LM fiscal multiplier is
\begin{equation}
\left.\frac{\dd Y}{\dd G}\right|_{M/P,e}=\frac{1}{\alpha+bk/h}.
\label{eq:islm_multiplier}
\end{equation}
The denominator is larger than in \eqref{eq:simple_multiplier}. The difference is the interest-rate/crowding-out channel induced by money-market equilibrium.

\subsection*{Adding the BP block under a fixed exchange rate}

Let the BP schedule be
\begin{equation}
BP=X_0+\eta e-m_B Y+\kappa(i-i^*)=0,
\qquad \kappa>0.
\label{eq:bp_linear}
\end{equation}
With the exchange rate fixed, \(\dd e=0\), and with \(i^*\) fixed, differentiating \eqref{eq:bp_linear} gives
\begin{equation}
-m_B\dd Y+\kappa\dd i=0,
\end{equation}
so that
\begin{equation}
\dd i=\frac{m_B}{\kappa}\dd Y.
\end{equation}
Substituting into the differentiated IS equation with \(\dd e=0\),
\begin{equation}
\alpha\dd Y+b\left(\frac{m_B}{\kappa}\dd Y\right)=\dd G.
\end{equation}
Therefore,
\begin{equation}
\left.\frac{\dd Y}{\dd G}\right|_{e, BP}=\frac{1}{\alpha+bm_B/\kappa}.
\label{eq:fixed_bp_multiplier}
\end{equation}
When \(\kappa\to\infty\), the BP-induced interest-rate adjustment disappears and \eqref{eq:fixed_bp_multiplier} approaches \(1/\alpha\). This limiting result is standard in Mundell--Fleming analysis, but it is still a multiplier with respect to scalar \(G\).

\subsection*{Adding the BP block under a flexible exchange rate}

Now let \(M/P\) be fixed and allow \(e\) to adjust. From the LM equation,
\begin{equation}
\dd i=\frac{k}{h}\dd Y.
\end{equation}
From the BP equation,
\begin{equation}
\eta\dd e-m_B\dd Y+\kappa\dd i=0.
\end{equation}
Substituting the LM response into BP,
\begin{equation}
\eta\dd e=m_B\dd Y-\kappa\frac{k}{h}\dd Y.
\label{eq:de_flex}
\end{equation}
Substitute \eqref{eq:de_flex} and \(\dd i=(k/h)\dd Y\) into the IS differential:
\begin{align}
\alpha\dd Y+b\frac{k}{h}\dd Y-\eta\dd e&=\dd G,\\
\alpha\dd Y+\frac{bk}{h}\dd Y-\left(m_B\dd Y-\frac{\kappa k}{h}\dd Y\right)&=\dd G.
\end{align}
Thus,
\begin{equation}
\left[\alpha-m_B+\frac{k}{h}(b+\kappa)\right]\dd Y=\dd G.
\end{equation}
The flexible-rate fiscal multiplier is
\begin{equation}
\left.\frac{\dd Y}{\dd G}\right|_{M/P,BP}=\frac{1}{\alpha-m_B+(k/h)(b+\kappa)}.
\label{eq:flex_multiplier}
\end{equation}
If \(m_B=m\), then \(\alpha-m_B=1-c(1-t)\), and
\begin{equation}
\left.\frac{\dd Y}{\dd G}\right|_{M/P,BP}=\frac{1}{1-c(1-t)+(k/h)(b+\kappa)}.
\label{eq:flex_multiplier_m}
\end{equation}
As \(\kappa\) rises, the denominator rises, and the multiplier falls:
\begin{equation}
\pder{}{\kappa}\left[1-c(1-t)+\frac{k}{h}(b+\kappa)\right]=\frac{k}{h}>0.
\end{equation}
The classical conclusion that fiscal policy is weakened under high capital mobility and flexible exchange rates follows from this denominator. But again the result is derived with respect to \(G\), not with respect to a vector of heterogeneous instruments.

\subsection*{The aggregation problem inside the canonical derivation}

The scalar \(G\) appears in \eqref{eq:is_raw} as if one additional unit of any public-spending component had the same effect on demand. If the actual fiscal vector is \(\gvec\), the right-hand side should be written as
\begin{equation}
A-bi+\eta e+\Omega(\gvec),
\end{equation}
where \(\Omega\) maps fiscal instruments into effective demand and supply channels. The scalar derivation is valid only if
\begin{equation}
\Omega(\gvec)=\omega(\ones'\gvec)
\end{equation}
locally. The next section proves the exact condition for this representation.

\section{Fiscal aggregation: geometry, propositions and bias}

Let output be
\begin{equation}
Y=F(\gvec,X),
\end{equation}
where \(\gvec\in\R^n\) is the fiscal vector and \(X\in\R^k\) collects all non-fiscal states: money supply, exchange rate, debt, risk, external demand, productivity, institutional quality and so on. The scalar fiscal aggregate is
\begin{equation}
G=\ones'\gvec.
\end{equation}
The aggregation map \(P:\R^n\to\R\) is \(P\gvec=\ones'\gvec\). Its rank is one:
\begin{equation}
\rank(P)=1.
\end{equation}
By rank-nullity,
\begin{equation}
\dim\kernel(P)=n-1.
\end{equation}
Therefore, when \(n>1\), there are infinitely many fiscal recompositions that leave \(G\) unchanged. For \(n=3\), with \(\gvec=(G_C,G_I,TR)'\), two basis directions in the null space are
\begin{equation}
v_1=(-1,1,0)',\qquad v_2=(-1,0,1)'.
\end{equation}
Both satisfy \(\ones'v_1=\ones'v_2=0\). They are invisible to scalar \(G\), but not necessarily invisible to output.

\begin{proposition}[Local sufficiency of fiscal aggregation]
Let \(F:\R^n\times\R^k\to\R\) be continuously differentiable. A scalar aggregate \(G=\ones'\gvec\) is locally sufficient for the first-order fiscal effect on \(Y=F(\gvec,X)\), in the sense that every zero-sum fiscal recomposition has zero first-order effect on output, if and only if there exists a scalar \(\lambda(\gvec,X)\) such that
\begin{equation}
\nabla_{\gvec}F(\gvec,X)=\lambda(\gvec,X)\ones.
\label{eq:gradient_condition}
\end{equation}
\end{proposition}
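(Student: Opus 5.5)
The plan is to reduce the statement to linear algebra on the gradient, using the kernel geometry set out just before the proposition. Fix \((\gvec,X)\) and write \(\nabla=\nabla_{\gvec}F(\gvec,X)\). Since \(F\) is continuously differentiable, the first-order effect on output of a fiscal perturbation in direction \(v\in\R^n\) is the directional derivative
\begin{equation*}
\lim_{s\to 0}\frac{F(\gvec+sv,X)-F(\gvec,X)}{s}=\nabla'v .
\end{equation*}
A zero-sum recomposition is exactly a direction \(v\in\kernel(P)\), that is, one with \(\ones'v=0\). The sufficiency property in the statement therefore means: \(\nabla'v=0\) for every \(v\in\kernel(P)\). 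Equivalently, \(\nabla\in\kernel(P)^{\perp}\).

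\emph{Sufficiency (\(\Leftarrow\)).} Suppose \(\nabla=\lambda\ones\). Then for any \(v\) with \(\ones'v=0\),
\begin{equation*}
\nabla'v=\lambda\,\ones'v=0 .
\end{equation*}
This direction is immediate.

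\emph{Necessity (\(\Rightarrow\)).} Suppose \(\nabla'v=0\) for all \(v\in\kernel(P)\). I would use the explicit basis of \(\kernel(P)\) generalizing \(v_1,v_2\) above, namely \(e_j-e_1\) for \(j=2,\dots,n\). These are \(n-1\) linearly independent vectors in \(\kernel(P)\), and by rank-nullity \(\dim\kernel(P)=n-1\), so they form a basis. Applying the hypothesis to \(v=e_j-e_1\) gives
\begin{equation*}
\partial_{g_j}F=\partial_{g_1}F\qquad\text{for every } j .
\end{equation*}
Setting \(\lambda(\gvec,X)=\partial_{g_1}F(\gvec,X)\) then yields \(\nabla=\lambda\ones\). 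As an alternative, one can argue abstractly: \(\kernel(P)^{\perp}=\operatorname{range}(P')=\operatorname{span}\{\ones\}\) by the fundamental theorem of linear algebra.

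Two remarks will close the argument. First, \(\lambda\) is pointwise well defined and continuous, because it is a partial derivative of a \(C^1\) function, so the condition is legitimately one on the whole gradient field. Second, the common value is interpretable: for any allocation \(\wvec\) with \(\ones'\wvec=1\), we get \(\nabla'\wvec=\lambda\), which anticipates the directional-multiplier result stated in the introduction.

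There is no real obstacle here. The only step needing care is making precise that ``first-order effect'' means the directional derivative \(\nabla'v\), which is justified by differentiability of \(F\). The case \(n=1\) is trivial, since \(\kernel(P)=\{0\}\) and any scalar gradient is a multiple of \(\ones\).
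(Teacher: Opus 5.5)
Your proposal is correct and follows essentially the same route as the paper. Both identify the first-order effect with $\nabla_{\gvec}F'v$, note that zero-sum recompositions are exactly $\kernel(P)$, and conclude that $\nabla_{\gvec}F\in\kernel(P)^{\perp}=\operatorname{span}\{\ones\}$; this is your ``abstract'' alternative, which the paper obtains from the row space of $P$. Your explicit directions $e_j-e_1$ are a concrete unpacking of that orthogonal-complement step, and in fact you only need them to lie in $\kernel(P)$, not to form a basis.
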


\begin{proof}
A recomposition that leaves \(G\) unchanged satisfies \(\ones'\dd\gvec=0\). The first-order effect on output is
\begin{equation}
\dd Y=\nabla_{\gvec}F(\gvec,X)'\dd\gvec.
\end{equation}
This is zero for every \(\dd\gvec\in\kernel(P)\) if and only if \(\nabla_{\gvec}F\) lies in the orthogonal complement of \(\kernel(P)\). Since \(P\) has row space \(\operatorname{span}\{\ones\}\), the orthogonal complement of \(\kernel(P)\) is \(\operatorname{span}\{\ones\}\). Hence \(\nabla_{\gvec}F=\lambda\ones\). Conversely, if \(\nabla_{\gvec}F=\lambda\ones\), then \(\dd Y=\lambda\ones'\dd\gvec=0\) for every zero-sum recomposition. This proves the result.
\end{proof}

\begin{proposition}[Global sufficiency on connected level sets]
Fix \(X\). Suppose the fiscal domain \(\mathcal{D}\subseteq\R^n\) is convex and \(F(\cdot,X)\) is continuously differentiable. There exists a scalar function \(f\) such that
\begin{equation}
F(\gvec,X)=f(\ones'\gvec,X)
\end{equation}
for all \(\gvec\in\mathcal{D}\) if and only if \(\nabla_{\gvec}F(\gvec,X)=\lambda(\gvec,X)\ones\) throughout \(\mathcal{D}\).
\end{proposition}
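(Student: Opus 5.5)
The plan is to prove the two directions separately, using the convexity of \(\mathcal{D}\) to join points by line segments and to integrate along them.

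\emph{Necessity.} Suppose \(F(\gvec,X)=f(\ones'\gvec,X)\) on \(\mathcal{D}\). I will not assume \(f\) is differentiable; I will read the gradient condition off directly. Take an interior point \(\gvec\) and any direction \(v\in\kernel(P)\), so that \(\ones'v=0\). Then \(\ones'(\gvec+sv)=\ones'\gvec\) for every \(s\), so \(F(\gvec+sv,X)=F(\gvec,X)\) for all small \(s\). The directional derivative \(\nabla_{\gvec}F'v\) therefore vanishes. By the argument of the Local sufficiency Proposition, this means \(\nabla_{\gvec}F\in\kernel(P)^{\perp}=\operatorname{span}\{\ones\}\), which is the gradient condition. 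At boundary points, if the domain has nonempty interior, continuity of \(\nabla_{\gvec}F\) extends the condition to them. I will state explicitly that differentiability is understood in the usual sense on \(\mathcal{D}\), for instance relative to an open neighbourhood. If the condition is wanted along \(\kernel(P)\) directions lying in the affine hull, the same one-sided argument applies.

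\emph{Sufficiency.} Assume \(\nabla_{\gvec}F=\lambda\ones\) throughout \(\mathcal{D}\). Take \(\gvec_0,\gvec_1\in\mathcal{D}\) with \(\ones'\gvec_0=\ones'\gvec_1\). By convexity the segment \(\gvec(s)=\gvec_0+s(\gvec_1-\gvec_0)\), \(s\in[0,1]\), stays inside \(\mathcal{D}\). By the fundamental theorem of calculus,
\[
F(\gvec_1,X)-F(\gvec_0,X)=\int_0^1 \lambda(\gvec(s),X)\,\ones'(\gvec_1-\gvec_0)\,\dd s=0.
\]
So \(F\) is constant on each level set \(\{\gvec\in\mathcal{D}:\ones'\gvec=G\}\). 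Note that convexity makes each such set convex, hence connected, which is what the title refers to. I then define \(f(G,X)=F(\gvec,X)\) for any \(\gvec\in\mathcal{D}\) with \(\ones'\gvec=G\). This is well defined by the preceding step, and it is defined on the interval \(P(\mathcal{D})\). By construction \(F=f(\ones'\gvec,X)\).

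As a regularity remark, \(f\) is in fact \(C^1\) on the interior of \(P(\mathcal{D})\) with \(f'=\lambda\). To see this, restrict \(F\) to a line \(\gvec+s\ones/n\) and observe that \(\ones'(\gvec+s\ones/n)=G+s\).

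\emph{Main obstacle.} The main delicacy is the necessity direction at boundary points, and more generally when \(\mathcal{D}\) has empty interior, for example if it lies in a hyperplane. There the gradient is not determined by \(F\) on \(\mathcal{D}\) alone. I would handle this by interpreting continuous differentiability as holding on an open set containing \(\mathcal{D}\) and proving the condition on the interior, or on the relative interior of \(\mathcal{D}\) along directions in \(\kernel(P)\) parallel to its affine hull. I would then extend by continuity and state this caveat explicitly.
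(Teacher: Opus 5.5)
Your proposal is correct. Sufficiency follows the paper's route, and necessity takes a genuinely different and more careful one.

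For sufficiency, you join two points of a level set by a segment in \(\mathcal D\) and observe that the derivative of \(F\) along it is \(\lambda\ones'(\gvec_b-\gvec_a)=0\). That is exactly the paper's argument; writing it as an integral instead of a vanishing derivative changes nothing.

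For necessity, the paper applies the chain rule to \(F=f(\ones'\gvec,X)\) and reads off \(\lambda=f_G\). This tacitly assumes two things the statement does not give. First, that \(f\) is differentiable, although the statement only posits some scalar function. Second, that the identity holds on an open set, so that ambient partial derivatives of \(F\) can be computed through it. You avoid both assumptions. At interior points you differentiate \(F\) along zero-sum directions \(v\in\kernel(P)\), along which \(F\) is constant, and reuse the orthogonal-complement step of the local proposition. You then reach boundary points by continuity of \(\nabla_{\gvec}F\). This is legitimate because a convex set with nonempty interior lies in the closure of its interior.

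The restriction to \(\gvec+s\ones/n\) from your regularity remark also closes the paper's gap. When \(\mathcal D\) has nonempty interior and \(F=f(\ones'\gvec,X)\) with \(F\) continuously differentiable, it shows that \(f\) is automatically \(C^1\) on the interior of \(P(\mathcal D)\). That justifies the paper's chain rule after the fact at interior points.

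Your empty-interior caveat is genuine, not cosmetic. Take \(n=2\), \(\mathcal D=\{(g_1,0)':g_1\in\R\}\) and \(F(\gvec)=g_1+5g_2\). Then \(F=\ones'\gvec\) on \(\mathcal D\), yet \(\nabla_{\gvec}F=(1,5)'\). So the literal ``only if'' fails unless the gradient condition is read along directions in the affine hull of \(\mathcal D\), or \(\mathcal D\) is assumed to have nonempty interior.

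The paper's route buys brevity and the explicit identification \(\lambda=f_G\). Yours needs no regularity of \(f\) and makes explicit the domain assumptions under which the equivalence actually holds.
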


\begin{proof}
If \(F(\gvec,X)=f(\ones'\gvec,X)\), then the chain rule gives
\begin{equation}
\nabla_{\gvec}F(\gvec,X)=f_G(\ones'\gvec,X)\ones,
\end{equation}
so the gradient condition holds with \(\lambda=f_G\). Conversely, suppose \(\nabla_{\gvec}F=\lambda\ones\) everywhere. Let \(\gvec_a\) and \(\gvec_b\) satisfy \(\ones'\gvec_a=\ones'\gvec_b\). Because \(\mathcal{D}\) is convex, the line path \(\gamma(\theta)=\gvec_a+\theta(\gvec_b-\gvec_a)\) lies in \(\mathcal{D}\). Along this path,
\begin{equation}
\frac{\dd}{\dd\theta}F(\gamma(\theta),X)=\nabla_{\gvec}F(\gamma(\theta),X)'(\gvec_b-\gvec_a)=\lambda(\gamma(\theta),X)\ones'(\gvec_b-\gvec_a)=0.
\end{equation}
Hence \(F(\gvec_a,X)=F(\gvec_b,X)\). The function is therefore constant on each level set of \(\ones'\gvec\), so define \(f(G,X)\) as that common value. This proves global sufficiency on connected level sets.
\end{proof}

\begin{remark}[Local versus global aggregation]
The local theorem concerns first-order effects around a given fiscal package. The global theorem is stronger: it requires the gradient condition throughout a connected fiscal domain. In applications, local sufficiency is enough to interpret marginal multipliers, while global sufficiency is needed to collapse all fiscal packages with the same total \(G\) into one scalar object.
\end{remark}

\begin{corollary}[Marginal-effect inequality destroys sufficiency]
If there exist instruments \(j,k\) such that
\begin{equation}
\pder{F}{g_j}\neq\pder{F}{g_k},
\end{equation}
then \(G=\ones'\gvec\) is not locally sufficient for the fiscal effect on output.
\end{corollary}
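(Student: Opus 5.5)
The plan is to argue by contraposition through the local sufficiency proposition (Proposition 1), or equivalently to exhibit an explicit zero-sum recomposition with nonzero first-order output effect. The direct construction is the more transparent route, so I will follow it and use Proposition 1 only as a cross-check.

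Fix the point \((\gvec,X)\) at which the partial derivatives are evaluated, with \(j\neq k\) and \(\partial F/\partial g_j\neq\partial F/\partial g_k\). Let \(e_j\) and \(e_k\) denote standard basis vectors and take the recomposition direction
\begin{equation}
v=e_j-e_k .
\end{equation}
Since \(\ones'v=1-1=0\), we have \(v\in\kernel(P)\). Geometrically, \(v\) is a generalization of the basis vectors \(v_1,v_2\) displayed before Proposition 1: it moves a small amount of spending from instrument \(k\) to instrument \(j\) and leaves \(G\) unchanged.

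Because \(F\) is continuously differentiable, the first-order output effect of the perturbation \(\dd\gvec=\varepsilon v\) is
\begin{equation}
\dd Y=\nabla_{\gvec}F(\gvec,X)'(\varepsilon v)=\varepsilon\left(\pder{F}{g_j}-\pder{F}{g_k}\right).
\end{equation}
This is nonzero for every \(\varepsilon\neq 0\). We have therefore found a zero-sum recomposition whose first-order effect on output does not vanish. That contradicts the definition of local sufficiency in Proposition 1, which requires every zero-sum recomposition to have zero first-order effect.

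As a cross-check through Proposition 1 itself: if \(G\) were locally sufficient, then \(\nabla_{\gvec}F=\lambda\ones\). Every coordinate of the gradient would then equal \(\lambda\), which forces \(\partial F/\partial g_j=\partial F/\partial g_k\), again a contradiction.

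There is essentially no technical obstacle. The one point needing care is bookkeeping about where the inequality holds. The hypothesis must refer to the same point \((\gvec,X)\) at which local sufficiency is asserted, because the conclusion is pointwise. I will also note the implicit requirement \(n\ge 2\), which is automatic since two distinct instruments \(j\neq k\) exist. Finally, I will remark that the same construction defeats the global sufficiency of Proposition 2 on any domain containing that point, since global sufficiency implies the gradient condition everywhere.
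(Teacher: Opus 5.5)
Your proof is correct and essentially the paper's: the paper gives no separate proof, treating the corollary as the immediate contrapositive of Proposition 1, since \(\nabla_{\gvec}F=\lambda\ones\) forces all partials to be equal, which is exactly your cross-check. Your explicit witness \(e_j-e_k\in\kernel(P)\) is the same kind of zero-sum direction the paper uses right after the corollary (\(v_1=(-1,1,0)'\), which changes output by \(a_I-a_C\)), so nothing is missing.
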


For a linear fiscal map,
\begin{equation}
F(\gvec,X)=H(X)+a_CG_C+a_IG_I+a_TTR,
\end{equation}
the gradient is
\begin{equation}
\nabla_{\gvec}F=(a_C,a_I,a_T)'.
\end{equation}
The zero-sum recomposition \(v_1=(-1,1,0)'\) changes output by
\begin{equation}
\nabla F'v_1=-a_C+a_I,
\end{equation}
while \(v_2=(-1,0,1)'\) changes output by
\begin{equation}
\nabla F'v_2=-a_C+a_T.
\end{equation}
Thus, the scalar aggregate hides economically relevant changes whenever \(a_C\), \(a_I\) and \(a_T\) differ.

\begin{proposition}[Composition-weighted multiplier]
Let a marginal fiscal impulse be allocated according to \(\dd\gvec=\wvec\dd G\), where \(\ones'\wvec=1\). Then the scalar multiplier generated by that impulse is
\begin{equation}
\frac{\dd Y}{\dd G}=\nabla_{\gvec}F(\gvec,X)'\wvec.
\label{eq:weighted_multiplier}
\end{equation}
It is invariant to \(\wvec\) if and only if \(\nabla_{\gvec}F=\lambda\ones\).
\end{proposition}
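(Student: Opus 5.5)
The plan has two parts: the formula, which is a chain-rule computation, and the invariance equivalence, which reduces to Proposition 1 (local sufficiency).

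\textbf{The formula.} Fix \(X\) and consider the path \(\gvec(G)\) with \(\dd\gvec/\dd G=\wvec\), for example \(\gvec(G)=\gvec_0+(G-G_0)\wvec\). Since \(\ones'\wvec=1\), this path satisfies \(\ones'\gvec(G)=G\), so it is a legitimate parametrisation by the scalar aggregate. Because \(F\) is continuously differentiable, the chain rule gives
\begin{equation*}
\frac{\dd Y}{\dd G}=\frac{\dd}{\dd G}F(\gvec(G),X)=\nabla_{\gvec}F(\gvec,X)'\frac{\dd\gvec}{\dd G}=\nabla_{\gvec}F(\gvec,X)'\wvec,
\end{equation*}
which is \eqref{eq:weighted_multiplier}. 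This step is routine.

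\textbf{Sufficiency of the gradient condition.} Suppose \(\nabla_{\gvec}F=\lambda\ones\). Then \(\nabla_{\gvec}F'\wvec=\lambda\ones'\wvec=\lambda\) for every admissible \(\wvec\), so the multiplier does not depend on \(\wvec\).

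\textbf{Necessity.} Suppose the map \(\wvec\mapsto\nabla_{\gvec}F'\wvec\) is constant on the affine set \(\{\wvec:\ones'\wvec=1\}\). Take any admissible \(\wvec_0\), say \(\wvec_0=\ones/n\), and any \(v\in\kernel(P)\). Then \(\wvec_0+v\) is also admissible, since \(\ones'(\wvec_0+v)=1+0=1\). Invariance gives
\begin{equation*}
\nabla_{\gvec}F'(\wvec_0+v)=\nabla_{\gvec}F'\wvec_0,
\end{equation*}
hence \(\nabla_{\gvec}F'v=0\) for all \(v\in\kernel(P)\). This is exactly the hypothesis of Proposition 1. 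The orthogonal-complement argument there, namely \(\kernel(P)^\perp=\operatorname{span}\{\ones\}\), then yields \(\nabla_{\gvec}F=\lambda\ones\). Concretely, \(\lambda\) is the common value of the multiplier, because \(\lambda=\lambda\ones'\wvec_0=\nabla_{\gvec}F'\wvec_0\).

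\textbf{Main obstacle.} The only delicate point is how invariance is quantified. It must hold over all weight vectors with \(\ones'\wvec=1\), including those with negative entries. That set is what spans the affine hyperplane, so its differences fill \(\kernel(P)\).

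If one insists on nonnegative weights (the simplex), the argument still works at an interior point. Near \(\wvec_0=\ones/n\), the perturbations \(\wvec_0+\epsilon v\) remain in the simplex for small \(\epsilon\) and every \(v\in\kernel(P)\). Linearity of \(\wvec\mapsto\nabla_{\gvec}F'\wvec\) then gives \(\epsilon\nabla_{\gvec}F'v=0\), so \(\nabla_{\gvec}F'v=0\), and the same conclusion follows.

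I would note this in one line and otherwise rely on Proposition 1.
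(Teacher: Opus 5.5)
Your proposal is correct and follows the paper's proof. The formula comes from substituting \(\dd\gvec=\wvec\dd G\) into the total differential, sufficiency comes from \(\lambda\ones'\wvec=\lambda\), and necessity is the converse that the paper only asserts for weights on the simplex. Your route through \(\kernel(P)^\perp=\operatorname{span}\{\ones\}\), with the interior-point perturbation for nonnegative weights, is a valid way to fill in that step. On the simplex, though, the quickest check is to compare the unit vectors \(\wvec=e_j\) and \(\wvec=e_k\), which gives \(\pder{F}{g_j}=\pder{F}{g_k}\) directly.
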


\begin{proof}
Substitute \(\dd\gvec=\wvec\dd G\) into the total differential:
\begin{equation}
\dd Y=\nabla_{\gvec}F'\dd\gvec=\nabla_{\gvec}F'\wvec\dd G.
\end{equation}
Dividing by \(\dd G\) gives \eqref{eq:weighted_multiplier}. If \(\nabla_{\gvec}F=\lambda\ones\), then \(\nabla_{\gvec}F'\wvec=\lambda\ones'\wvec=\lambda\), independent of \(\wvec\). Conversely, if \(\nabla_{\gvec}F'\wvec\) is invariant over all feasible \(\wvec\) on the simplex, all components of \(\nabla_{\gvec}F\) must be equal.
\end{proof}

\begin{corollary}[Aggregation bias]
Suppose a scalar-\(G\) model uses multiplier \(\bar\lambda\), while the true marginal impulse has allocation \(\wvec\). The first-order aggregation error is
\begin{equation}
\mc{B}(\wvec)=\left[\nabla_{\gvec}F(\gvec,X)'\wvec-\bar\lambda\right]\dd G.
\label{eq:aggregation_bias}
\end{equation}
If \(\bar\lambda\) is estimated under allocation \(\bar\wvec\), then
\begin{equation}
\bar\lambda=\nabla_{\gvec}F'\bar\wvec,
\end{equation}
and the composition-transfer error is
\begin{equation}
\mc{B}(\wvec,\bar\wvec)=\nabla_{\gvec}F'(\wvec-\bar\wvec)\dd G.
\end{equation}
\end{corollary}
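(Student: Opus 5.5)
The plan is to obtain both displayed formulas directly from the composition-weighted multiplier proposition, treating the aggregation error as the difference between the true first-order change in output and the change the scalar-\(G\) model predicts. First, I will fix the marginal impulse \(\dd\gvec=\wvec\dd G\) with \(\ones'\wvec=1\). By the preceding proposition, the true first-order output response is \(\dd Y=\nabla_{\gvec}F(\gvec,X)'\wvec\,\dd G\). The scalar model with multiplier \(\bar\lambda\) predicts \(\dd\hat Y=\bar\lambda\,\dd G\), because it sees only \(\ones'\dd\gvec=\dd G\). Subtracting gives \(\mc{B}(\wvec)=\dd Y-\dd\hat Y=[\nabla_{\gvec}F'\wvec-\bar\lambda]\dd G\), which is \eqref{eq:aggregation_bias}. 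The step is just linearity of the total differential; the only care needed is to state explicitly that "error" means true minus predicted, taken to first order.

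Second, I will interpret "estimated under allocation \(\bar\wvec\)" as meaning the multiplier is correctly identified for the impulse \(\bar\wvec\dd G\), with \(\ones'\bar\wvec=1\). The same proposition then gives \(\bar\lambda=\nabla_{\gvec}F'\bar\wvec\), with the gradient evaluated at the same \((\gvec,X)\). Substituting this into the first formula yields \(\mc{B}(\wvec,\bar\wvec)=\nabla_{\gvec}F'(\wvec-\bar\wvec)\dd G\).

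Finally, as a consistency check tying the result back to the local sufficiency proposition, I will note that \(\ones'(\wvec-\bar\wvec)=0\). Hence \(\wvec-\bar\wvec\in\kernel(P)\), so the composition-transfer error is exactly the first-order effect of a zero-sum recomposition. It therefore vanishes for all pairs \(\wvec,\bar\wvec\) if and only if \eqref{eq:gradient_condition} holds.

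The main obstacle is not computational but interpretive. The estimation step must be read as evaluating the gradient at the same state \((\gvec,X)\) as the new impulse. If \(\bar\lambda\) came from a different fiscal package or state, the formula would acquire an additional term from the change in \(\nabla_{\gvec}F\). I would state this maintained assumption explicitly rather than try to bound that extra term.
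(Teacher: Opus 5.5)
Your proposal is correct and matches the paper's implicit argument: the paper gives no separate proof, and the corollary follows from the composition-weighted multiplier proposition by subtracting the scalar prediction $\bar\lambda\,\dd G$ from $\nabla_{\gvec}F'\wvec\,\dd G$ and then substituting $\bar\lambda=\nabla_{\gvec}F'\bar\wvec$. Your two additions are sound clarifications the paper leaves unstated: the caveat that $\bar\lambda$ must be evaluated at the same $(\gvec,X)$, and the observation that $\wvec-\bar\wvec\in\kernel(P)$.
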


This corollary explains why multiplier estimates must report the composition of the fiscal impulse. A multiplier estimated from a military procurement shock, a cash-transfer program or an infrastructure program need not generalize to a package with a different \(\wvec\).

\section{A fiscal-composition extension of IS--LM--BP}

This section writes a minimal model rich enough to distinguish current spending, public investment and transfers. The point is not to build a full DSGE or HANK model. The point is to make explicit the channels suppressed by a scalar \(G\).

\subsection*{Heterogeneous consumption and transfers}

Let households be indexed by \(q=1,\ldots,Q\). Household-group consumption is
\begin{equation}
C_{q,t}=C_{0q}+c_q(Y_{q,t}-T_{q,t}+TR_{q,t}),
\end{equation}
with \(0\leq c_q\leq 1\). Aggregate consumption is
\begin{equation}
C_t=\sum_{q=1}^Q C_{q,t}.
\label{eq:heterogeneous_consumption}
\end{equation}
If \(Y_{q,t}=s_qY_t\), \(\sum_qs_q=1\), then the income feedback term is
\begin{equation}
\sum_{q=1}^Qc_qs_qY_t=\bar c_sY_t,
\qquad
\bar c_s=\sum_{q=1}^Qc_qs_q.
\end{equation}
A transfer to group \(q\) raises consumption demand by \(c_q\dd TR_q\), but only the domestically absorbed part contributes to domestic output. If \(\mu_q\) is the import leakage of the marginal consumption basket, the direct domestic absorption coefficient is
\begin{equation}
\omega_{TR,q}=c_q(1-\mu_q).
\end{equation}
This is why transfers targeted to high-MPC liquidity-constrained households can have large impact effects, while transfers to low-MPC households can have smaller demand effects.

\subsection*{Current spending and public investment absorption}

Let current purchases have import leakage \(\mu_C\). The domestic absorption of current spending is
\begin{equation}
\omega_CG_{C,t}=(1-\mu_C)G_{C,t}.
\end{equation}
Let public investment project \(r\) have import leakage \(\mu_{I,r}\). Its direct contemporaneous domestic absorption is
\begin{equation}
\omega_{I,r}G_{I,r,t}=(1-\mu_{I,r})G_{I,r,t}.
\end{equation}
Current spending and public investment can therefore have different impact multipliers even before introducing a capital-stock channel.

\subsection*{Public-capital accumulation}

Public investment has an intertemporal channel. Let
\begin{equation}
K_{g,r,t+1}=(1-\delta_r)K_{g,r,t}+\phi_rG_{I,r,t},
\label{eq:kg_law}
\end{equation}
where \(\delta_r\in[0,1)\) is depreciation and \(\phi_r\in[0,1]\) is implementation efficiency. The parameter \(\phi_r\) is crucial: a budgetary unit of investment spending is not automatically a unit of productive public capital. Delays, corruption, design failure, overpricing and poor maintenance reduce \(\phi_r\).

Let potential output be
\begin{equation}
Y_t^*=A_tK_t^\alpha L_t^{1-\alpha}\prod_{r=1}^RK_{g,r,t}^{\psi_r}.
\label{eq:production}
\end{equation}
Taking logs,
\begin{equation}
\log Y_t^*=\log A_t+\alpha\log K_t+(1-\alpha)\log L_t+\sum_{r=1}^R\psi_r\log K_{g,r,t}.
\end{equation}
Thus,
\begin{equation}
\pder{Y_t^*}{K_{g,r,t}}=\psi_r\frac{Y_t^*}{K_{g,r,t}}.
\end{equation}
A marginal investment at \(t\) changes the public-capital stock at \(t+s\) by
\begin{equation}
\pder{K_{g,r,t+s}}{G_{I,r,t}}=\phi_r(1-\delta_r)^{s-1},\qquad s\geq 1.
\end{equation}
Therefore,
\begin{equation}
\pder{Y_{t+s}^*}{G_{I,r,t}}=\psi_r\frac{Y_{t+s}^*}{K_{g,r,t+s}}\phi_r(1-\delta_r)^{s-1}.
\label{eq:future_effect}
\end{equation}
This is the formal reason public investment must be evaluated over time rather than only through its impact effect.

\subsection*{External balance and risk}

The external block can be written as
\begin{equation}
BP_t=NX_t+CF_t+\Delta R_t=0.
\end{equation}
A reduced-form capital-flow equation is
\begin{equation}
CF_t=\varphi\left(i_t-i_t^*-\E_t\Delta s_{t+1}-\rho_t\right)\Omega_t+\varepsilon_t,
\label{eq:capital_flow}
\end{equation}
where \(s_t\) is the log exchange rate, \(\rho_t\) is a risk premium, \(\Omega_t\) captures global liquidity and \(\varepsilon_t\) is an external shock. A simple risk-premium equation is
\begin{equation}
\rho_t=\rho_0+\rho_1\frac{B_t}{Y_t}+\rho_2\frac{B_t^{ext}}{X_t}+\rho_3\sigma_t.
\end{equation}
This specification is reduced-form, but it captures a point absent from the deterministic BP schedule: debt, risk and liquidity can shift external financing conditions.

\subsection*{Debt dynamics}

Let government debt evolve as
\begin{equation}
B_{t+1}=(1+i_t)B_t+G_{C,t}+\sum_rG_{I,r,t}+\sum_qTR_{q,t}-T_t.
\label{eq:debt}
\end{equation}
If \(d_t=B_t/Y_t\), then
\begin{equation}
d_{t+1}=\frac{B_{t+1}}{Y_{t+1}}.
\end{equation}
Differentiating with respect to fiscal instrument \(g_j\),
\begin{equation}
\pder{d_{t+1}}{g_j}=\frac{Y_{t+1}\pder{B_{t+1}}{g_j}-B_{t+1}\pder{Y_{t+1}}{g_j}}{Y_{t+1}^2}.
\label{eq:debt_ratio_derivative}
\end{equation}
A fiscal expansion can reduce the debt ratio only if its effect on the denominator is sufficiently large relative to its effect on the numerator:
\begin{equation}
B_{t+1}\pder{Y_{t+1}}{g_j}>Y_{t+1}\pder{B_{t+1}}{g_j}.
\end{equation}
This condition is instrument-specific. It cannot be evaluated from scalar \(G\) alone.

\section{Instrument-specific multipliers}

This section derives the instrument-specific multipliers in the stylized model. Let the reduced-form denominator of the contemporaneous demand response be
\begin{equation}
D_t=1-\bar c_{s,t}+m_t+\omega_{f,t}+\omega_{\rho,t},
\label{eq:D}
\end{equation}
where \(m_t\) is openness/import leakage in the macro denominator, \(\omega_{f,t}\) a financial crowding-out term and \(\omega_{\rho,t}\) a risk term. The denominator is a reduced-form way of collecting the IS--LM--BP channels derived in Section 3.

\subsection*{Current spending}

The impact multiplier of current spending is
\begin{equation}
\mc{M}_{C,t}^{0}=\pder{Y_t}{G_{C,t}}=\frac{1-\mu_C}{D_t}.
\label{eq:current_multiplier}
\end{equation}
This expression shows that even current spending need not have multiplier one. Its domestic effect falls when import leakage \(\mu_C\) rises or when the macro denominator \(D_t\) rises because of openness, interest-rate feedback, risk or weak fiscal position.

\subsection*{Transfers}

The impact multiplier of a transfer to group \(q\) is
\begin{equation}
\mc{M}_{TR,q,t}^{0}=\pder{Y_t}{TR_{q,t}}=\frac{c_q(1-\mu_q)}{D_t}.
\label{eq:transfer_multiplier}
\end{equation}
Thus a transfer has no single multiplier independent of the recipient group. It rises with the marginal propensity to consume \(c_q\) and falls with the import leakage \(\mu_q\). This expression clarifies why transfer programs can dominate impact effects when targeted to high-MPC, low-import-leakage households, even if they do not generate public capital.

\subsection*{Public investment}

For project \(r\), the impact component is
\begin{equation}
\mc{M}_{I,r,t}^{0}=\frac{1-\mu_{I,r}}{D_t}.
\end{equation}
The present-value multiplier adds future public-capital effects:
\begin{equation}
\mc{M}_{I,r,t}^{PV}=\frac{1-\mu_{I,r}}{D_t}+\sum_{s=1}^S\beta^s\frac{\zeta_{r,t+s}\phi_r(1-\delta_r)^{s-1}+\psi_r\frac{Y_{t+s}^*}{K_{g,r,t+s}}\phi_r(1-\delta_r)^{s-1}}{D_{t+s}}-\mc{C}_{r,t}.
\label{eq:investment_pv}
\end{equation}
The term \(\zeta\) captures direct complementarity between public capital and realized output. The term involving \(\psi_r\) captures the potential-output channel. \(\mc{C}_{r,t}\) collects financing, risk, delay, inflationary and external costs. Under the simplifying approximation \(D_{t+s}=D\), \(Y_{t+s}^*/K_{g,r,t+s}=\bar y_k\), \(\zeta_{r,t+s}=\zeta\), and \(\mc{C}_{r,t}=0\), the future channel becomes a finite geometric sum:
\begin{equation}
\sum_{s=1}^S\beta^s\frac{(\zeta+\psi_r\bar y_k)\phi_r(1-\delta_r)^{s-1}}{D}=\frac{(\zeta+\psi_r\bar y_k)\phi_r}{D}\beta\frac{1-[\beta(1-\delta_r)]^S}{1-\beta(1-\delta_r)}.
\label{eq:geometric_investment}
\end{equation}
Equation \eqref{eq:geometric_investment} explains why the value of public investment rises with implementation efficiency \(\phi_r\), public-capital productivity \(\psi_r\), persistence \(1-\delta_r\), the discount factor \(\beta\), and the horizon \(S\).

\subsection*{When does investment dominate current spending?}

Investment dominates current spending in present value when
\begin{equation}
\mc{M}_{I,r,t}^{PV}>\mc{M}_{C,t}^{0}.
\end{equation}
Using \eqref{eq:current_multiplier} and \eqref{eq:investment_pv}, this requires
\begin{equation}
\sum_{s=1}^S\beta^s\frac{\left[\zeta_{r,t+s}+\psi_r\frac{Y_{t+s}^*}{K_{g,r,t+s}}\right]\phi_r(1-\delta_r)^{s-1}}{D_{t+s}}>\frac{\mu_{I,r}-\mu_C}{D_t}+\mc{C}_{r,t}.
\label{eq:investment_condition}
\end{equation}
This inequality is central. It shows that public investment is not superior by definition. It dominates only if the present value of productivity and complementarity effects exceeds any additional import leakage and cost terms. This reconciles the meta-analytic tendency for public investment to have large multipliers with the practical fact that poorly implemented or import-intensive projects can perform badly.

\subsection*{A social objective with transfers and investment}

If policy design includes welfare as well as output, a fiscal planner may solve
\begin{equation}
\max_{\{G_C,G_{I,r},TR_q\}}\sum_{s=0}^S\beta^sY_{t+s}+\lambda_W W_t
\end{equation}
subject to budget, external and inflation constraints. For a transfer to group \(q\), the first-order condition includes both the output effect and the welfare effect:
\begin{equation}
\sum_{s=0}^S\beta^s\pder{Y_{t+s}}{TR_{q,t}}+\lambda_W\pder{W_t}{TR_{q,t}}=\lambda_B+\lambda_F\mu_q+\lambda_\pi\pder{\pi_t}{TR_{q,t}}.
\end{equation}
For investment project \(r\), the condition is
\begin{equation}
\sum_{s=0}^S\beta^s\pder{Y_{t+s}}{G_{I,r,t}}=\lambda_B+\lambda_F\mu_{I,r}+\lambda_\pi\pder{\pi_t}{G_{I,r,t}}+\lambda_R\pder{\rho_t}{G_{I,r,t}}.
\end{equation}
Thus, policy choice is not simply a comparison of scalar multipliers. It is a constrained allocation problem.

\section{Computational implementation and why each test was used}

The numerical model is deliberately transparent. The replication code is included in the submission package and is also available as an external Colab notebook at \url{https://colab.research.google.com/drive/1B24i_OYeVfsyQGasdQujvq9_hvuOgUut?usp=sharing}. Baseline output and public capital are normalized to 100. Each experiment applies the same one-period fiscal impulse and changes only the composition. The denominator is
\begin{equation}
D=1-\bar c+m+\omega_f+\omega_\rho+\omega_d\max(d_0-\bar d,0),
\end{equation}
where \(\omega_f\) is a financial penalty and \(\omega_\rho+\omega_d\max(d_0-\bar d,0)\) captures risk and debt fragility. Output deviations are computed as
\begin{equation}
\Delta Y_t=\frac{\text{demand}_t}{D}+\frac{\zeta\Delta K_{g,t}+\Delta Y_t^*}{D}-\text{risk drag}_t.
\end{equation}
Potential-output deviation is
\begin{equation}
\Delta Y_t^*=\psi\frac{Y_0}{K_{g,0}}\Delta K_{g,t}.
\end{equation}
The public-capital law is
\begin{equation}
\Delta K_{g,t+1}=(1-\delta_g)\Delta K_{g,t}+\phi G_{I,t}.
\end{equation}
The external-balance proxy is
\begin{equation}
NX_t=-\text{fiscal imports}_t-n_x\Delta Y_t+\chi\Delta K_{g,t}.
\end{equation}
Debt evolves according to
\begin{equation}
\Delta B_{t+1}=(1+r)\Delta B_t+\text{fiscal cost}_t-\tau\Delta Y_t.
\end{equation}
Inflation pressure is measured as
\begin{equation}
\pi_t=\lambda_\pi(\Delta Y_t-\Delta Y_t^*).
\end{equation}

\begin{table}[!htbp]\centering
\caption{Baseline parameters and adversarial ranges.}\label{tab:parameters}
\begin{threeparttable}
\scriptsize
\begin{tabular}{p{0.13\textwidth}p{0.27\textwidth}p{0.12\textwidth}p{0.16\textwidth}p{0.24\textwidth}}
\toprule
Parameter & Interpretation & Baseline & Range & Rationale\\
\midrule
$\beta$ & Discount factor & 0.96 & 0.90--0.985 & Medium-run present-value horizon; varied in robustness checks.\\
$\bar c$ & Aggregate consumption feedback & 0.68 & 0.48--0.88 & Captures reduced-form Keynesian demand feedback.\\
$m$ & Openness/import denominator & 0.22 & 0.02--0.55 & Open-economy leakage; motivated by evidence that openness lowers multipliers.\\
$\omega_f$ & Financial penalty & 0.18 & 0--0.70 & Reduced-form interest-rate/crowding-out channel.\\
$\omega_\rho$ & Risk penalty & 0.05 & 0--0.40 & Reduced-form sovereign/external-finance risk channel.\\
$\mu_C$ & Import leakage, current spending & 0.22 & 0.02--0.70 & Allows domestic absorption to differ across instruments.\\
$\mu_I$ & Import leakage, public investment & 0.28 & 0.02--0.90 & Captures imported capital-goods content and project dependence.\\
$\mu_p,\mu_r$ & Import leakage, transfers & 0.18, 0.36 & 0.02--0.80 & Allows consumption baskets to differ by group.\\
$c_p,c_r$ & MPCs of poorer/richer recipients & 0.90, 0.45 & 0.65--0.98; 0.15--0.70 & Reflects heterogeneous liquidity and marginal propensities to consume.\\
$\phi$ & Implementation efficiency & 0.75 & 0--1 & Share of investment spending converted into productive public capital.\\
$\psi$ & Public-capital output elasticity & 0.12 & 0--0.25 & Consistent with positive but heterogeneous public-capital productivity literature.\\
$\delta_g$ & Public-capital depreciation & 0.07 & 0.02--0.18 & Controls persistence of public-capital effects.\\
$\zeta$ & Direct public-capital demand/productivity channel & 0.08 & 0--0.20 & Reduced-form complementarity between public capital and output.\\
$\chi$ & External-balance effect of public capital & 0.02 & -0.02--0.08 & Captures infrastructure effects on trade costs/net exports.\\
$\tau$ & Tax feedback & 0.18 & 0.08--0.32 & Revenue response to output deviations.\\
$d_0$ & Initial debt ratio & 0.60 & 0.15--1.50 & Fiscal-space/risk sensitivity channel.\\
\bottomrule
\end{tabular}
\begin{tablenotes}\footnotesize\item Notes: Ranges are illustrative priors used for computational consistency checks, not estimates. The purpose is to stress-test the aggregation claim over broad admissible configurations.\end{tablenotes}
\end{threeparttable}
\end{table}

The tests were chosen to match the theoretical claims. The symbolic tests verify the aggregation theorem, null-space recompositions, a nonlinear counterexample and the finite geometric public-capital formula. Finite-difference tests check that the numerical impact derivatives equal the analytical derivatives. Accounting-identity tests verify that the simulated debt, capital and external-balance equations are internally consistent. Adversarial tests construct cases where poor-household transfers dominate current spending in impact, cases where investment fails because import leakage is extreme and productivity is zero, and cases where investment dominates under high efficiency, high productivity and low import leakage. Sensitivity tests verify monotonicity in implementation efficiency, investment import leakage, debt and openness. Monte Carlo tests verify that the scalar-\(G\) ranking is not universal over broad parameter ranges. Extreme stress tests verify that the code remains numerically well behaved under wide parameter supports. Seed tests verify reproducibility.

\section{Results}

\begin{table}[!htbp]\centering
\caption{Baseline equal-aggregate fiscal impulse by composition.}\label{tab:baseline}
\begin{tabular}{lrr}
\toprule
Composition & Impact effect on $Y$ & $PV(Y)$\\
\midrule
Current spending & 5.0649 & 5.0548\\
Public investment & 4.6753 & 12.3784\\
Poor-household transfer & 4.7922 & 4.7820\\
Rich-household transfer & 1.8701 & 1.8586\\
Mixed package & 4.1006 & 6.0184\\
\bottomrule
\end{tabular}
\begin{tablenotes}\footnotesize\item Notes: All rows use the same total one-period fiscal impulse; only composition changes. Values are model units relative to normalized baseline output.\end{tablenotes}
\end{table}

Table \ref{tab:baseline} reports the central numerical implication. Every row uses the same aggregate one-period fiscal impulse. The scalar-\(G\) model predicts the same contemporaneous value for all rows when it ignores composition. The fiscal-composition model produces different impact effects and different present values. Public investment has a lower impact effect than current spending in the baseline because its direct import leakage is slightly higher, but it has a much larger present value because it accumulates public capital. A transfer to poorer households has a higher impact effect than a transfer to richer households because \(c_p(1-\mu_p)>c_r(1-\mu_r)\). The mixed package lies between the pure policies.

\begin{figure}[!htbp]
\centering
\includegraphics[width=0.92\textwidth]{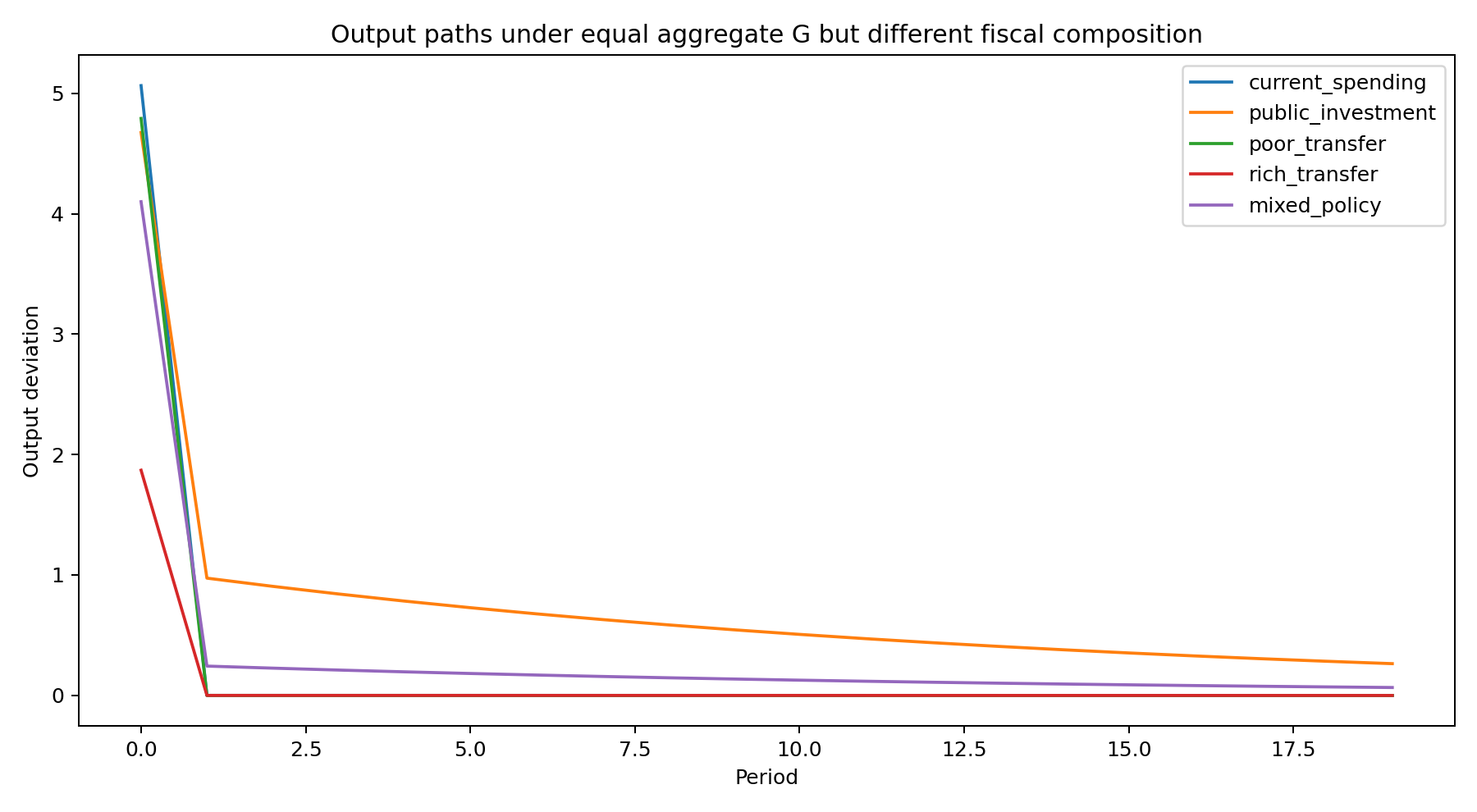}
\caption{Output paths under equal aggregate spending but different fiscal composition.}
\label{fig:paths}
\end{figure}

Figure \ref{fig:paths} shows the time profile. Current spending and transfers are largely impact policies in the stylized model. Public investment generates a smaller immediate impact but a persistent path through public capital. This is exactly the distinction hidden by scalar \(G\).

\begin{table}[!htbp]\centering
\caption{Monte Carlo and stress-test summary.}\label{tab:mcsummary}
\begin{tabular}{lr}
\toprule
Statistic & Value\\
\midrule
Monte Carlo draws & 3,000\\
Stress-test draws & 500\\
Share with composition-dependent PV(Y) & 1.0000\\
Investment wins share & 0.7883\\
Poor transfer wins impact share & 0.2277\\
Mean absolute scalar-G error & 2.1883\\
Current-spending winners & 426\\
Public-investment winners & 2,365\\
Poor-transfer winners & 203\\
Rich-transfer winners & 6.0000\\
Mixed-policy winners & 0.0000\\
Mean phi if investment wins & 0.5705\\
Mean phi otherwise & 0.2674\\
Mean psi if investment wins & 0.1310\\
Mean psi otherwise & 0.0965\\
Mean $\mu_I$ if investment wins & 0.4031\\
Mean $\mu_I$ otherwise & 0.6688\\
\bottomrule
\end{tabular}
\begin{tablenotes}\footnotesize\item Notes: Monte Carlo draws use broad admissible parameter intervals; stress tests use wider extreme intervals. The exercise is not an empirical calibration.\end{tablenotes}
\end{table}

Table \ref{tab:mcsummary} summarizes the Monte Carlo exercise. In 100 percent of draws, equal aggregate spending does not imply equal present-value output across compositions. Public investment wins in 78.83 percent of cases, but not in all cases. Current spending wins in 426 cases, poor-household transfers in 203 cases and rich-household transfers in 6 cases. The mean absolute error of the scalar-\(G\) prediction is 2.1883 model units. Investment wins under parameter draws with higher implementation efficiency and public-capital productivity and lower investment import leakage. This is the numerical counterpart of inequality \eqref{eq:investment_condition}.

\begin{figure}[!htbp]
\centering
\includegraphics[width=0.92\textwidth]{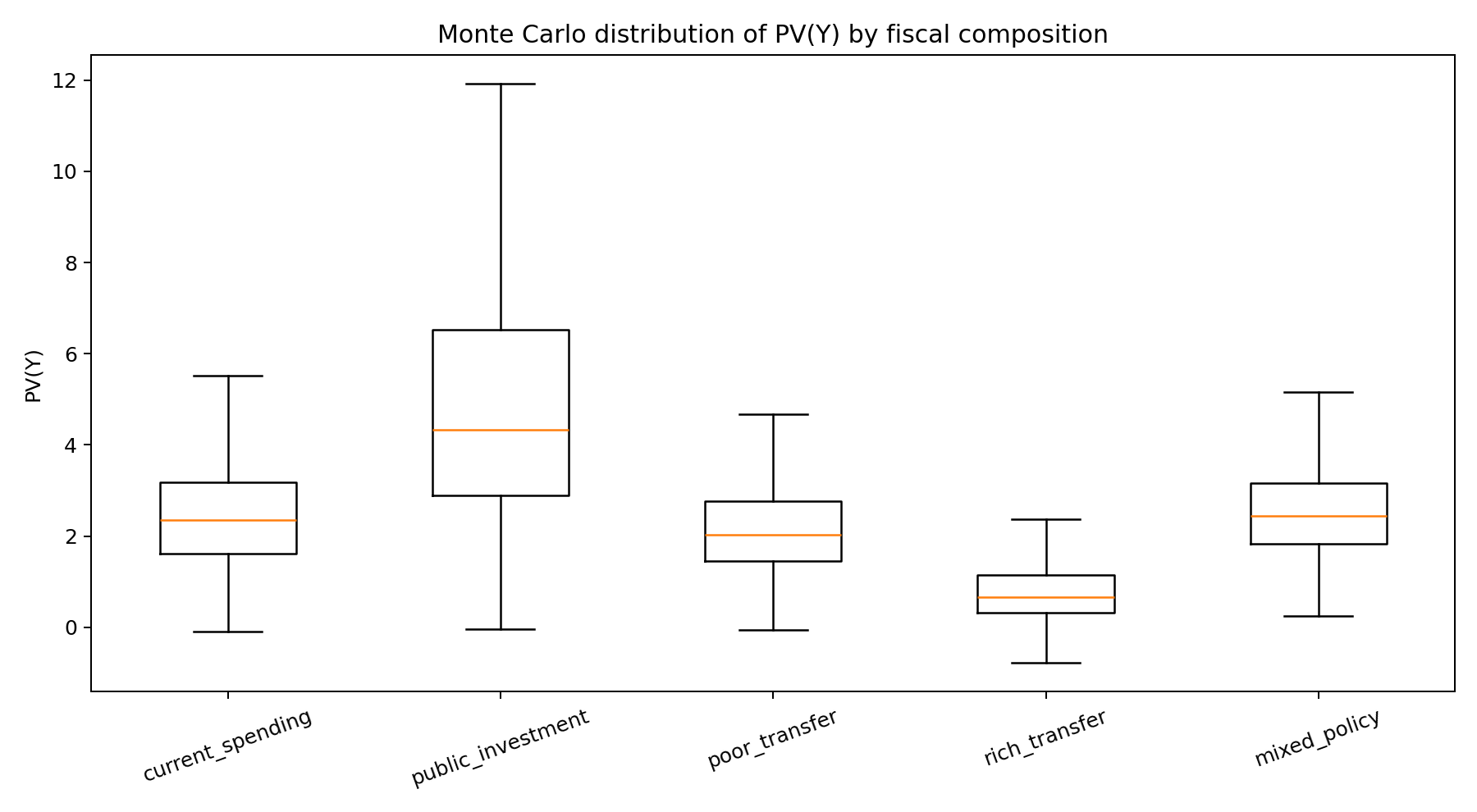}
\caption{Monte Carlo distribution of present-value output by fiscal composition.}
\label{fig:mcbox}
\end{figure}

Figure \ref{fig:mcbox} shows that the distributions overlap. This overlap is important. It means the paper does not claim a universal ranking. The model supports a conditional statement: public investment is often strong when efficient and productive, but it can be dominated by other instruments under adverse conditions.

\begin{figure}[!htbp]
\centering
\begin{subfigure}{0.48\textwidth}
\includegraphics[width=\textwidth]{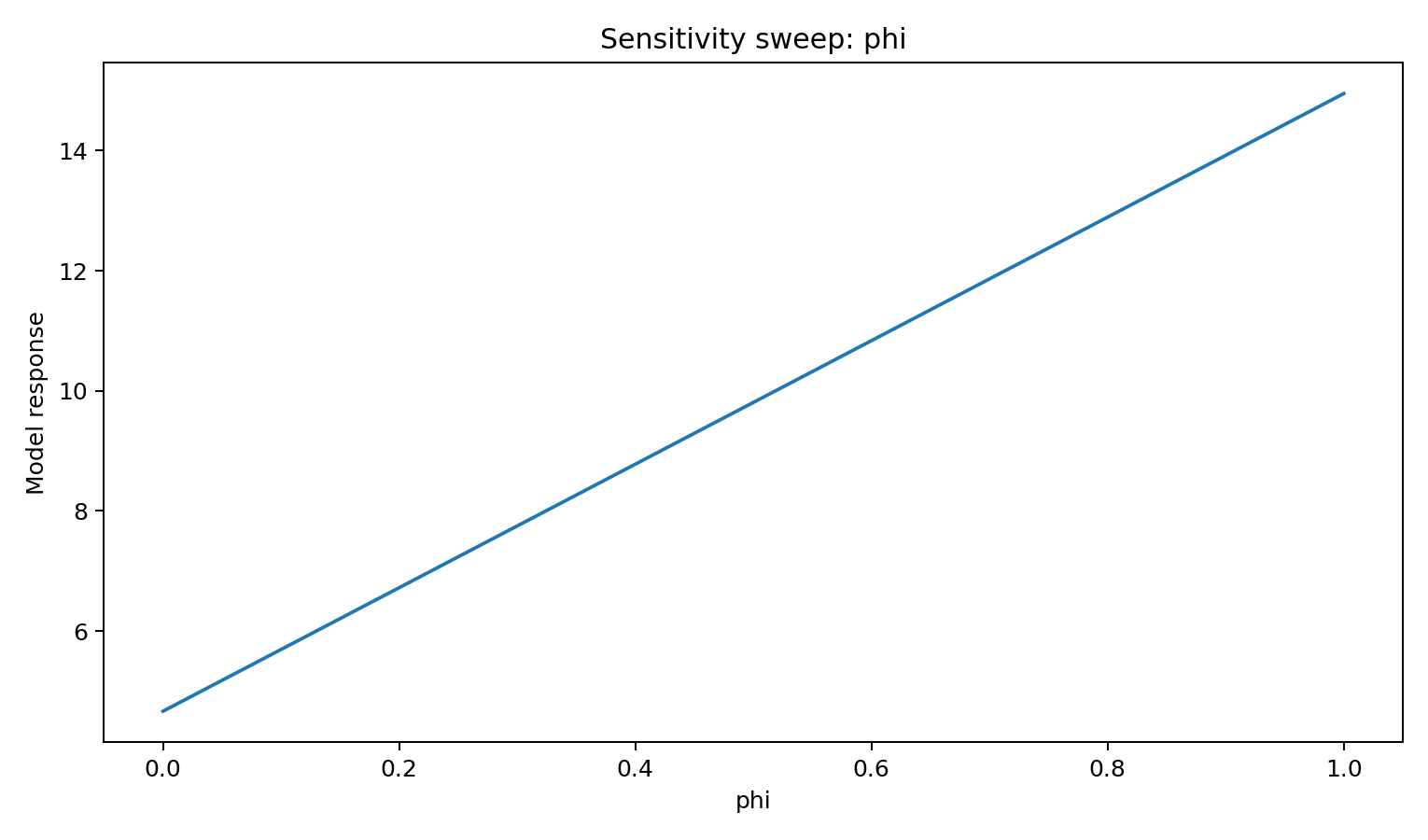}
\caption{Implementation efficiency \(\phi\).}
\end{subfigure}
\begin{subfigure}{0.48\textwidth}
\includegraphics[width=\textwidth]{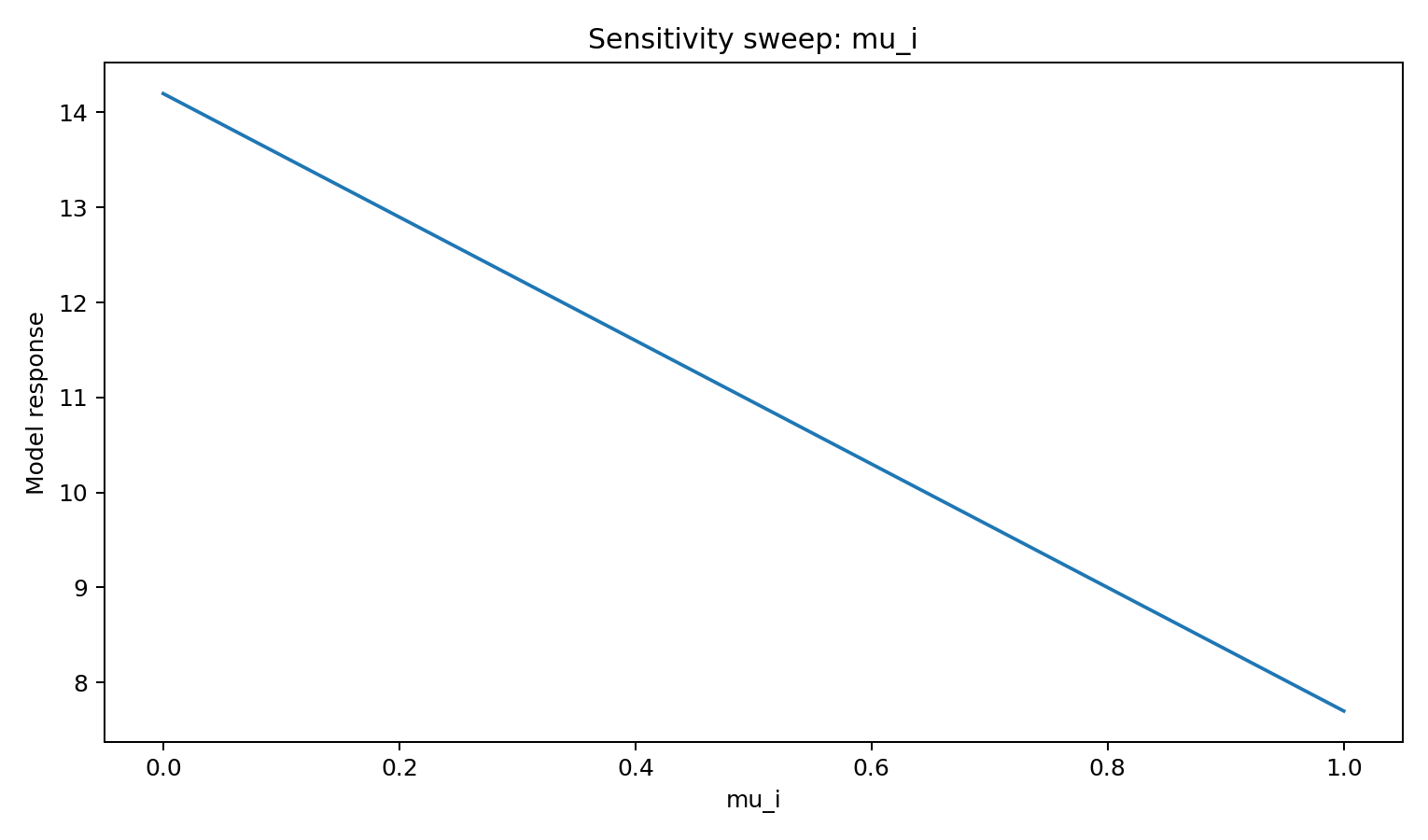}
\caption{Investment import leakage \(\mu_I\).}
\end{subfigure}
\begin{subfigure}{0.48\textwidth}
\includegraphics[width=\textwidth]{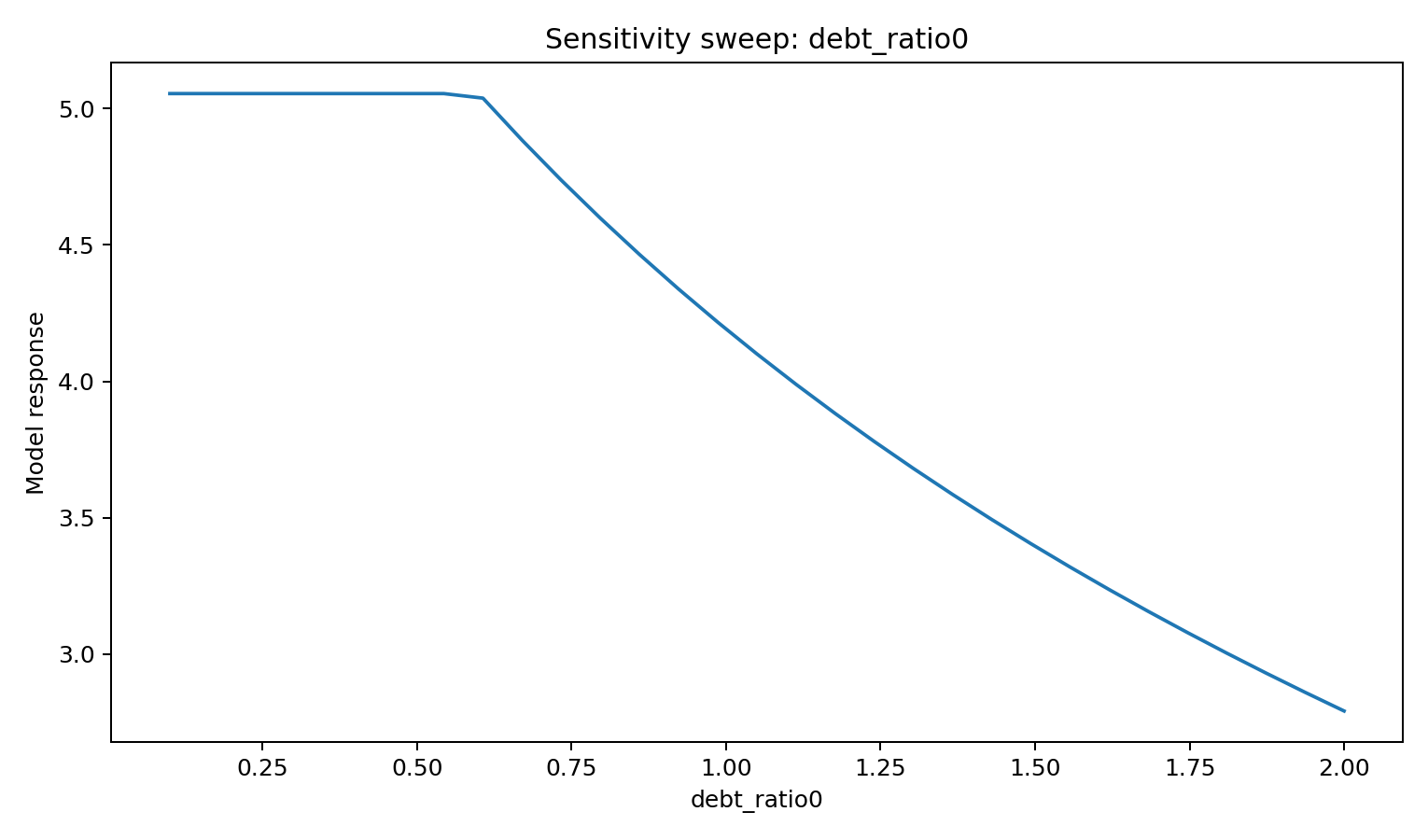}
\caption{Initial debt ratio.}
\end{subfigure}
\begin{subfigure}{0.48\textwidth}
\includegraphics[width=\textwidth]{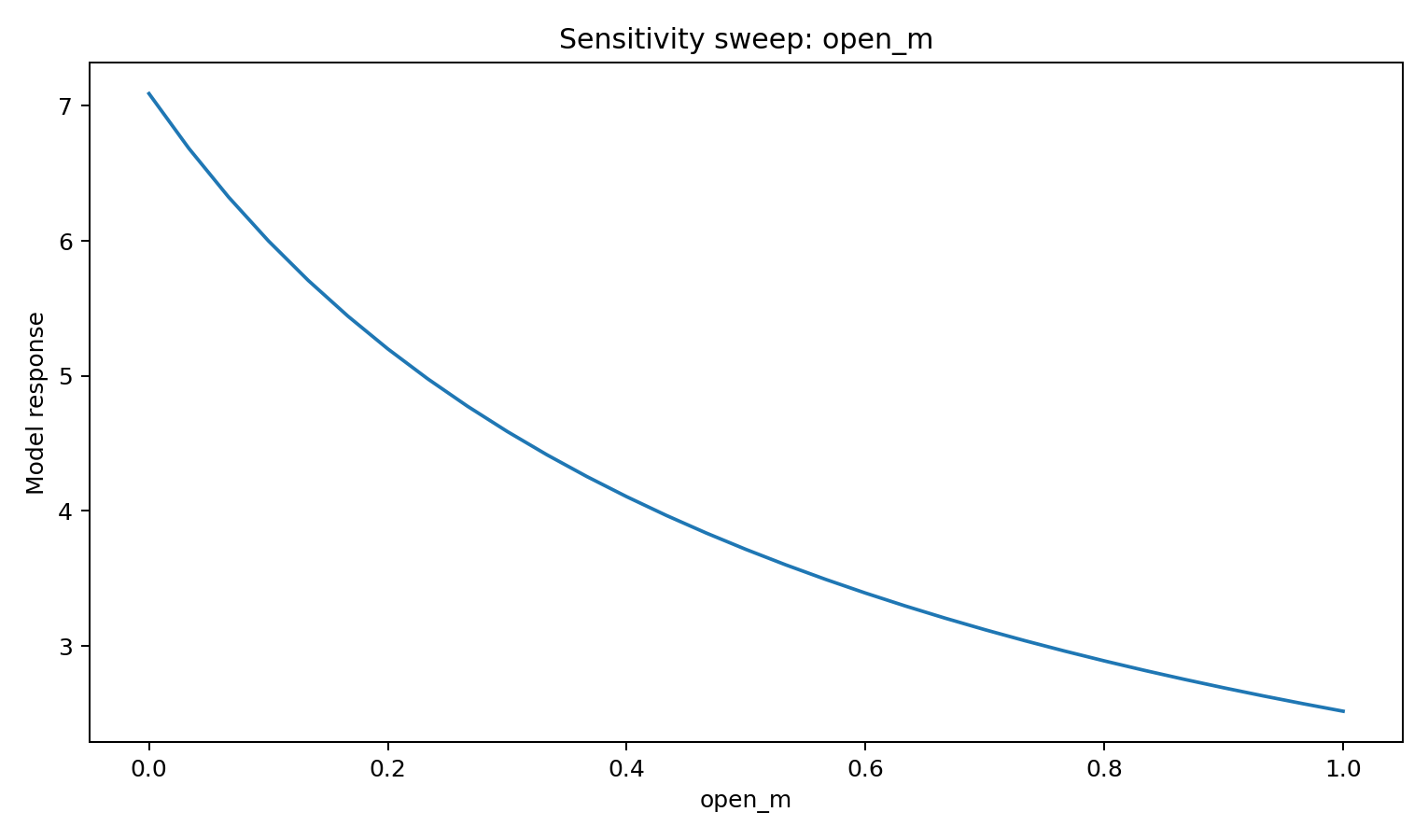}
\caption{Openness/import denominator.}
\end{subfigure}
\caption{Sensitivity sweeps.}
\label{fig:sweeps}
\end{figure}

Figure \ref{fig:sweeps} reports monotonicity checks. Investment present value rises with implementation efficiency and falls with investment import leakage. Current-spending effects fall with debt fragility and openness. These results are not empirical estimates; they verify that the computational model behaves according to the theoretical restrictions.

\begin{table}[!htbp]\centering
\caption{Validation battery by test family.}\label{tab:testfamilies}
\begin{tabular}{lrr}
\toprule
Test family & Tests & Passed\\
\midrule
Deterministic adversarial cases and identities & 21 & 21\\
Monte Carlo and stress testing & 10 & 10\\
Replication package generation & 1 & 1\\
Sensitivity sweeps & 4 & 4\\
Symbolic algebra and sufficiency & 6 & 6\\
\midrule
Total & 42 & 42\\
\bottomrule
\end{tabular}
\begin{tablenotes}\footnotesize\item Notes: Full test-level output is reported in Appendix~\ref{app:tests} and in the replication archive.\end{tablenotes}
\end{table}

The full battery passes 42/42 checks. The complete table is reported in Appendix \ref{app:tests}. The main point is that the computational exercise verifies internal consistency rather than empirical truth. It tests whether the formal claims are implemented correctly, whether the accounting equations hold, whether counterexamples behave as expected, and whether the scalar-\(G\) model fails under heterogeneous instruments.

\section{Discussion}

The main theoretical implication is that the fiscal multiplier is not a primitive. It is a derivative of output with respect to a particular fiscal direction. When policy is scalar, the direction is hidden. When policy is vector-valued, the direction is \(\wvec\), and the multiplier is \(\nabla F'\wvec\). This has direct implications for interpreting empirical multipliers. A multiplier estimated from an infrastructure shock, a transfer shock or a procurement shock should not be transferred to a different fiscal package unless the homogeneity condition is plausible.

The second implication concerns public investment. The model separates impact absorption from intertemporal productivity. Public investment can have a lower impact multiplier than current spending and still a higher present-value multiplier. It can also fail if implementation efficiency is low, import leakage high, depreciation rapid or risk costs large. Therefore, the correct statement is not that public investment always dominates. The correct statement is that public investment can dominate when the present value of its productivity and complementarity effects exceeds its additional costs and leakages.

The third implication concerns transfers. A transfer is not one instrument if recipients differ. Equation \eqref{eq:transfer_multiplier} shows that targeting matters through \(c_q\) and \(\mu_q\). Transfers to high-MPC households can be powerful impact stabilizers, while transfers to low-MPC or high-import-leakage households can have smaller domestic effects. A welfare objective can nevertheless justify transfers even when their output multiplier is not the highest, because welfare weights and poverty effects matter.

The fourth implication concerns the external constraint. The classical Mundell--Fleming model already shows that the exchange-rate regime and capital mobility condition fiscal effects. The extended model adds import content, risk and debt fragility. This matters especially in open economies and emerging markets. Fiscal design should therefore report the foreign-exchange content of the package and the financing environment.

A final implication concerns empirical reporting. A fiscal-multiplier estimate should report not only its point estimate and confidence interval, but also the composition vector of the shock that generated it. Without the composition vector, the estimate is incomplete for policy transfer. In the notation of this paper, the reported number is \(\nabla_{\gvec}F'\bar\wvec\), not \(\nabla_{\gvec}F'\wvec\) for every possible \(\wvec\). This is the direct policy content of the aggregation theorem.

\section{Limitations and threats to interpretation}

The paper makes a formal and computational argument, not an empirical one. The parameter ranges are illustrative and adversarial; they are not estimated for a specific country. The model is reduced-form. Risk, monetary response, external balance and fiscal financing are represented parsimoniously rather than derived from optimizing agents. Public-investment productivity is imposed through \(\phi\), \(\psi\), \(\delta_g\) and \(\zeta\), not estimated. The Monte Carlo shares are not probabilities about the world; they are outcomes under broad illustrative parameter ranges. The purpose of the numerical exercise is to test internal consistency and generate counterexamples to scalar sufficiency.

These limitations are deliberate. A more empirical paper would estimate instrument-specific multipliers with local projections, SVARs, narrative shocks, regional designs or project-level disbursement data. A more structural paper would embed the fiscal vector in a full DSGE or HANK model. This paper instead isolates one methodological point: scalar aggregation is valid only under a gradient homogeneity condition. That point applies regardless of the empirical strategy.

\section{Conclusion}

This paper develops a detailed mathematical critique of scalar fiscal aggregation in IS--LM--BP. The canonical model remains coherent under its assumptions, but the scalar public-spending variable \(G\) is not generally sufficient for fiscal policy analysis. The necessary and sufficient local condition is \(\nabla_{\gvec}Y=\lambda\ones\). When instruments have different marginal effects, the aggregate multiplier becomes \(\nabla_{\gvec}Y'\wvec\), a composition-weighted object.

The detailed derivations show how the canonical multipliers arise and why they depend on closure assumptions. The fiscal-composition model then shows how current spending, transfers and public investment generate distinct impact and present-value effects. Public investment has an intertemporal capital channel; transfers depend on recipient MPCs and import leakages; external constraints and debt fragility alter the denominator of fiscal effects. The computational exercise confirms the logic under symbolic, deterministic, adversarial, sensitivity and Monte Carlo tests.

The implication for research and policy is straightforward. Fiscal-multiplier estimates should report the composition of the fiscal impulse, the state of the economy, openness, debt, financing conditions, implementation efficiency and targeting. Without these dimensions, a scalar multiplier is not a structural policy parameter. It is an average over an implicit composition vector.

\section*{Declarations}

\textbf{Funding.} The author received no external funding for this research.\\
\textbf{Competing interests.} The author declares no competing interests.\\
\textbf{Ethics approval and consent to participate.} Not applicable.\\
\textbf{Consent for publication.} Not applicable.\\
\textbf{Data availability.} All generated numerical outputs used in the paper are included in the replication folder of the submission package.\\
\textbf{Code availability.} The complete one-cell Colab/Python replication script is included in \texttt{replication/colab\_one\_cell\_is\_lm\_bp\_extended\_validation.txt}. An external executable Colab version is available at \url{https://colab.research.google.com/drive/1B24i_OYeVfsyQGasdQujvq9_hvuOgUut?usp=sharing}.\\
\textbf{Author contribution.} Ricardo Alonzo Fern\'andez Salguero is the sole author and is responsible for conceptualization, formal analysis, computational design and manuscript preparation.

\FloatBarrier
\appendix

\section{Additional derivations}

\subsection{Second-order composition effects}

The local aggregation theorem is first-order. If \(F\) is twice differentiable, second-order effects are governed by the Hessian \(H_F\). For a zero-sum recomposition \(v\), the second-order output effect is
\begin{equation}
\Delta Y=\nabla F'v\epsilon+\frac{1}{2}\epsilon^2v'H_Fv+o(\epsilon^2).
\end{equation}
Even if \(\nabla F=\lambda\ones\), aggregation can fail at second order if \(v'H_Fv\neq 0\) for zero-sum \(v\). For example,
\begin{equation}
F(\gvec)=f(\ones'\gvec)+\gamma(G_C-G_I)^2
\end{equation}
has first-order composition effects except at \(G_C=G_I\), and second-order effects whenever \(\gamma\neq 0\). This motivates the nonlinear counterexample in the symbolic tests.

\subsection{Full derivation of transfer multipliers}

Let two groups have shares \(s_p\) and \(s_r\), MPCs \(c_p\) and \(c_r\), and transfer import leakages \(\mu_p\) and \(\mu_r\). Consumption is
\begin{equation}
C=C_{0p}+C_{0r}+c_p(s_pY-T_p+TR_p)+c_r(s_rY-T_r+TR_r).
\end{equation}
The income feedback is \((c_ps_p+c_rs_r)Y\). Define \(\bar c_s=c_ps_p+c_rs_r\). With denominator \(D=1-\bar c_s+m+\omega_f+\omega_\rho\), the domestic demand effect of transfers is
\begin{equation}
\dd Y=\frac{c_p(1-\mu_p)\dd TR_p+c_r(1-\mu_r)\dd TR_r}{D}.
\end{equation}
Hence
\begin{equation}
\pder{Y}{TR_p}=\frac{c_p(1-\mu_p)}{D},\qquad
\pder{Y}{TR_r}=\frac{c_r(1-\mu_r)}{D}.
\end{equation}
The transfer multiplier is therefore not a property of transfers alone. It is a property of targeting and household behavior.

\subsection{Present-value public-capital channel}

Given \(K_{g,t+1}=(1-\delta)K_{g,t}+\phi G_{I,t}\), the capital created at \(t+s\) by one unit of investment at \(t\) is \(\phi(1-\delta)^{s-1}\). If \(\partial Y^*/\partial K_g=\psi Y^*/K_g\), then
\begin{equation}
\sum_{s=1}^S\beta^s\pder{Y_{t+s}^*}{G_{I,t}}=\sum_{s=1}^S\beta^s\psi\frac{Y_{t+s}^*}{K_{g,t+s}}\phi(1-\delta)^{s-1}.
\end{equation}
Under the constant-ratio approximation \(Y_{t+s}^*/K_{g,t+s}=\bar y_k\),
\begin{equation}
\sum_{s=1}^S\beta^s\psi\bar y_k\phi(1-\delta)^{s-1}=\psi\bar y_k\phi\beta\frac{1-[\beta(1-\delta)]^S}{1-\beta(1-\delta)}.
\end{equation}
This is the expression verified by the symbolic test battery.

\subsection{Debt-ratio derivative}

Debt evolves as \(B_{t+1}=(1+i_t)B_t+g_t-T_t\). Let \(d_{t+1}=B_{t+1}/Y_{t+1}\). For instrument \(g_j\),
\begin{equation}
\pder{d_{t+1}}{g_j}=\frac{1}{Y_{t+1}}\pder{B_{t+1}}{g_j}-\frac{B_{t+1}}{Y_{t+1}^2}\pder{Y_{t+1}}{g_j}.
\end{equation}
This derivative can be positive or negative. A fiscal instrument can improve the debt ratio if its denominator effect on output is sufficiently large, but this condition is instrument-specific.

\section{Computational test battery}\label{app:tests}

\begin{longtable}{p{0.12\textwidth}p{0.37\textwidth}p{0.08\textwidth}p{0.32\textwidth}}
\caption{Complete computational consistency and adversarial test battery.}\label{tab:fulltests}\\
\toprule
Test & Name & Pass & Details\\
\midrule
\endfirsthead
\toprule Test & Name & Pass & Details\\ \midrule
\endhead
SYM-01 & Linear aggregation sufficiency & True & Requires a\_C=a\_I=a\_T.\\
SYM-02 & Zero-sum recompositions can affect output & True & Composition changes are invisible to scalar G.\\
SYM-03 & If F=f(sum G), zero-sum changes have zero first-order effect & True & Aggregation case verified.\\
SYM-04 & Nonlinear composition term violates aggregation & True & Counterexample verified.\\
SYM-05 & Flexible denominator rises with capital mobility & True & dD/dkappa=k/h.\\
SYM-06 & Public-capital PV equals finite geometric expression & True & Geometric PV verified.\\
DET-01 & Same aggregate G gives different PV(Y) & True & current\_spending=5.0548, public\_investment=12.3784, poor\_transfer=4.7820, rich\_transfer=1.8586, mixed\_policy=6.0184\\
DET-02 & Scalar-G model gives identical prediction & True & common=6.493506\\
DET-03 & Finite-difference derivatives match analytic derivatives & True & current\_spending: fd=1.01298701, an=1.01298701; public\_investment: fd=0.93506494, an=0.93506494; poor\_transfer: fd=0.95844156, an=0.95844156; rich\_transf...\\
DET-04 & Debt identity holds & True & All baseline periods checked.\\
DET-05 & Public-capital identity holds & True & All investment periods checked.\\
DET-06 & External-balance decomposition holds & True & NX identity checked.\\
DET-07 & Poor transfers can dominate current spending in impact & True & poor=6.2364; current=1.9481\\
DET-08 & Public investment does not universally dominate & True & best=current\_spending; investment PV=0.3124\\
DET-09 & Public investment dominates under favorable conditions & True & current\_spending=5.0548, public\_investment=36.8165, poor\_transfer=4.7820, rich\_transfer=1.8586, mixed\_policy=12.1282\\
DET-10 & Aggregation passes when marginal effects are homogeneous & True & 4.54545455, 4.54545455, 4.54545455, 4.54545455\\
DET-11 & Higher fiscal fragility lowers net fiscal effect & True & low debt PV=5.0379; high debt PV=2.4450\\
DET-12 & Higher openness lowers impact multiplier & True & low open=6.5000; high open=2.8889\\
DET-13 & Financial penalty lowers impact & True & low penalty=6.6102; high penalty=2.6174\\
DET-14 & Risk penalty lowers impact & True & low risk=5.4167; high risk=2.5658\\
DET-15 & Flexible-rate multiplier falls with capital mobility & True & M\_low=1.6304; M\_high=0.1123\\
DET-16 & Shock-size linearity holds in linearized case & True & small=2.025974; big=4.051948\\
DET-17 & Productive investment gains with longer horizon & True & H5=11.9112; H20=23.9886\\
DET-18 & Higher beta raises PV of future-oriented investment & True & beta low=13.1980; beta high=22.2274\\
DET-19 & Restricted closed-form difference matches simulation & True & closed=12.01412451; sim=12.01412451\\
DET-20 & Deterministic reproducibility & True & A=12.3783792471; B=12.3783792471\\
DET-21 & Baseline trajectories are finite & True & All arrays checked.\\
MC-01 & Same G almost never implies equal PV(Y) & True & 100.0000\% non-identical.\\
MC-02 & No universal fiscal ranking & True & current\_spending=426, public\_investment=2365, poor\_transfer=203, rich\_transfer=6, mixed\_policy=0\\
MC-03 & Investment wins sometimes but not always & True & investment win share=78.83\%\\
MC-04 & Poor transfers compete strongly in impact & True & poor-transfer best-impact share=22.77\%\\
MC-05 & Investment winners have higher phi/psi & True & phi 0.571$>$0.267; psi 0.131$>$0.097\\
MC-06 & Investment winners have lower import leakage & True & mu\_i 0.403$<$0.669\\
MC-07 & Scalar-G prediction has material composition error & True & MAE=2.1883\\
MC-08 & Monte Carlo outputs are finite & True & N=3000\\
MC-09 & Extreme stress tests remain finite & True & 500 extreme draws checked.\\
MC-10 & Monte Carlo reproducibility under fixed seed & True & max diff=0.00e+00\\
SENS-01 & Investment PV is non-decreasing in phi & True & phi sweep\\
SENS-02 & Investment PV is non-increasing in mu\_i & True & mu\_i sweep\\
SENS-03 & Current-spending PV is non-increasing in debt ratio & True & debt sweep\\
SENS-04 & Current-spending impact is non-increasing in openness & True & openness sweep\\
OUT-01 & Output ZIP package generated & True & ZIP package generated\\
\bottomrule
\end{longtable}


\begin{thebibliography}{99}

\bibitem[Abiad et~al.(2016)]{abiad2016}
Abiad, A., Furceri, D., and Topalova, P. (2016).
The macroeconomic effects of public investment: Evidence from advanced economies.
\textit{Journal of Macroeconomics}, 50, 224--240.
\url{https://doi.org/10.1016/j.jmacro.2016.07.005}

\bibitem[Auclert et~al.(2021)]{auclert2021}
Auclert, A., Bard\'oczy, B., Rognlie, M., and Straub, L. (2021).
Using the sequence-space Jacobian to solve and estimate heterogeneous-agent models.
\textit{Econometrica}, 89(5), 2375--2408.
\url{https://doi.org/10.3982/ECTA17434}

\bibitem[Auerbach and Gorodnichenko(2012)]{auerbach2012}
Auerbach, A. J. and Gorodnichenko, Y. (2012).
Measuring the output responses to fiscal policy.
\textit{American Economic Journal: Economic Policy}, 4(2), 1--27.
\url{https://doi.org/10.1257/pol.4.2.1}

\bibitem[Blanchard and Leigh(2013)]{blanchardleigh2013}
Blanchard, O. J. and Leigh, D. (2013).
Growth forecast errors and fiscal multipliers.
\textit{American Economic Review}, 103(3), 117--120.
\url{https://doi.org/10.1257/aer.103.3.117}

\bibitem[Blanchard and Perotti(2002)]{blanchardperotti2002}
Blanchard, O. and Perotti, R. (2002).
An empirical characterization of the dynamic effects of changes in government spending and taxes on output.
\textit{The Quarterly Journal of Economics}, 117(4), 1329--1368.
\url{https://doi.org/10.1162/003355302320935043}

\bibitem[Bohn(1998)]{bohn1998}
Bohn, H. (1998).
The behavior of U.S. public debt and deficits.
\textit{The Quarterly Journal of Economics}, 113(3), 949--963.
\url{https://doi.org/10.1162/003355398555793}

\bibitem[Bom and Ligthart(2014)]{bom2014}
Bom, P. R. D. and Ligthart, J. E. (2014).
What have we learned from three decades of research on the productivity of public capital?
\textit{Journal of Economic Surveys}, 28(5), 889--916.
\url{https://doi.org/10.1111/joes.12037}

\bibitem[Calvo(1998)]{calvo1998}
Calvo, G. A. (1998).
Capital flows and capital-market crises: The simple economics of sudden stops.
\textit{Journal of Applied Economics}, 1(1), 35--54.
\url{https://doi.org/10.1080/15140326.1998.12040516}

\bibitem[Clarida et~al.(1999)]{clarida1999}
Clarida, R., Gal\'i, J., and Gertler, M. (1999).
The science of monetary policy: A New Keynesian perspective.
\textit{Journal of Economic Literature}, 37(4), 1661--1707.
\url{https://doi.org/10.1257/jel.37.4.1661}

\bibitem[Dornbusch(1976)]{dornbusch1976}
Dornbusch, R. (1976).
Expectations and exchange rate dynamics.
\textit{Journal of Political Economy}, 84(6), 1161--1176.
\url{https://doi.org/10.1086/260506}

\bibitem[Fleming(1962)]{fleming1962}
Fleming, J. M. (1962).
Domestic financial policies under fixed and under floating exchange rates.
\textit{IMF Staff Papers}, 9(3), 369--380.
\url{https://doi.org/10.2307/3866091}

\bibitem[Gechert(2015)]{gechert2015}
Gechert, S. (2015).
What fiscal policy is most effective? A meta-regression analysis.
\textit{Oxford Economic Papers}, 67(3), 553--580.
\url{https://doi.org/10.1093/oep/gpv027}

\bibitem[Gechert and Rannenberg(2018)]{gechert2018}
Gechert, S. and Rannenberg, A. (2018).
Which fiscal multipliers are regime-dependent? A meta-regression analysis.
\textit{Journal of Economic Surveys}, 32(4), 1160--1182.
\url{https://doi.org/10.1111/joes.12241}

\bibitem[Hicks(1937)]{hicks1937}
Hicks, J. R. (1937).
Mr. Keynes and the ``Classics'': A suggested interpretation.
\textit{Econometrica}, 5(2), 147--159.
\url{https://doi.org/10.2307/1907242}

\bibitem[Huidrom et~al.(2020)]{huidrom2020}
Huidrom, R., Kose, M. A., Lim, J. J., and Ohnsorge, F. L. (2020).
Why do fiscal multipliers depend on fiscal positions?
\textit{Journal of Monetary Economics}, 114, 109--125.
\url{https://doi.org/10.1016/j.jmoneco.2019.03.004}

\bibitem[Ilzetzki et~al.(2013)]{ilzetzki2013}
Ilzetzki, E., Mendoza, E. G., and V\'egh, C. A. (2013).
How big (small?) are fiscal multipliers?
\textit{Journal of Monetary Economics}, 60(2), 239--254.
\url{https://doi.org/10.1016/j.jmoneco.2012.10.011}

\bibitem[Ilzetzki et~al.(2019)]{ilzetzkireinhartrogoff2019}
Ilzetzki, E., Reinhart, C. M., and Rogoff, K. S. (2019).
Exchange arrangements entering the twenty-first century: Which anchor will hold?
\textit{The Quarterly Journal of Economics}, 134(2), 599--646.
\url{https://doi.org/10.1093/qje/qjy033}

\bibitem[International Monetary Fund(2014)]{imf2014}
International Monetary Fund. (2014).
Is it time for an infrastructure push? The macroeconomic effects of public investment.
In \textit{World Economic Outlook, October 2014: Legacies, Clouds, Uncertainties}, Chapter 3.
\url{https://www.imf.org/-/media/Websites/IMF/imported-flagship-issues/external/pubs/ft/weo/2014/02/pdf/_c3pdf.pdf}

\bibitem[Kaplan and Violante(2014)]{kaplanviolante2014}
Kaplan, G. and Violante, G. L. (2014).
A model of the consumption response to fiscal stimulus payments.
\textit{Econometrica}, 82(4), 1199--1239.
\url{https://doi.org/10.3982/ECTA10528}

\bibitem[Kaplan et~al.(2018)]{kaplan2018}
Kaplan, G., Moll, B., and Violante, G. L. (2018).
Monetary policy according to HANK.
\textit{American Economic Review}, 108(3), 697--743.
\url{https://doi.org/10.1257/aer.20160042}

\bibitem[Kraay(2012)]{kraay2012}
Kraay, A. (2012).
How large is the government spending multiplier? Evidence from World Bank lending.
\textit{The Quarterly Journal of Economics}, 127(2), 829--887.
\url{https://doi.org/10.1093/qje/qjs008}

\bibitem[Kraay(2014)]{kraay2014}
Kraay, A. (2014).
Government spending multipliers in developing countries.
\textit{American Economic Journal: Macroeconomics}, 6(4), 170--208.
\url{https://doi.org/10.1257/mac.6.4.170}

\bibitem[Leeper et~al.(2010)]{leeperwalkeryang2010}
Leeper, E. M., Walker, T. B., and Yang, S.-C. S. (2010).
Government investment and fiscal stimulus.
\textit{Journal of Monetary Economics}, 57(8), 1000--1012.
\url{https://doi.org/10.1016/j.jmoneco.2010.09.002}

\bibitem[Mountford and Uhlig(2009)]{mountforduhlig2009}
Mountford, A. and Uhlig, H. (2009).
What are the effects of fiscal policy shocks?
\textit{Journal of Applied Econometrics}, 24(6), 960--992.
\url{https://doi.org/10.1002/jae.1079}

\bibitem[Mundell(1963)]{mundell1963}
Mundell, R. A. (1963).
Capital mobility and stabilization policy under fixed and flexible exchange rates.
\textit{Canadian Journal of Economics and Political Science}, 29(4), 475--485.
\url{https://doi.org/10.2307/139336}

\bibitem[Nakamura and Steinsson(2014)]{nakamurasteinsson2014}
Nakamura, E. and Steinsson, J. (2014).
Fiscal stimulus in a monetary union: Evidence from US regions.
\textit{American Economic Review}, 104(3), 753--792.
\url{https://doi.org/10.1257/aer.104.3.753}

\bibitem[Parker et~al.(2013)]{parker2013}
Parker, J. A., Souleles, N. S., Johnson, D. S., and McClelland, R. (2013).
Consumer spending and the economic stimulus payments of 2008.
\textit{American Economic Review}, 103(6), 2530--2553.
\url{https://doi.org/10.1257/aer.103.6.2530}

\bibitem[Perotti(2005)]{perotti2005}
Perotti, R. (2005).
Estimating the effects of fiscal policy in OECD countries.
CEPR Discussion Paper No. 4842.
\url{https://cepr.org/publications/dp4842}

\bibitem[Ramey(2019)]{ramey2019}
Ramey, V. A. (2019).
Ten years after the financial crisis: What have we learned from the renaissance in fiscal research?
\textit{Journal of Economic Perspectives}, 33(2), 89--114.
\url{https://doi.org/10.1257/jep.33.2.89}

\bibitem[Ramey and Zubairy(2018)]{rameyzubairy2018}
Ramey, V. A. and Zubairy, S. (2018).
Government spending multipliers in good times and in bad: Evidence from U.S. historical data.
\textit{Journal of Political Economy}, 126(2), 850--901.
\url{https://doi.org/10.1086/696277}

\bibitem[Reinhart and Rogoff(2004)]{reinhartrogoff2004}
Reinhart, C. M. and Rogoff, K. S. (2004).
The modern history of exchange rate arrangements: A reinterpretation.
\textit{The Quarterly Journal of Economics}, 119(1), 1--48.
\url{https://doi.org/10.1162/003355304772839515}

\bibitem[Romer(2000)]{romer2000}
Romer, D. H. (2000).
Keynesian macroeconomics without the LM curve.
\textit{Journal of Economic Perspectives}, 14(2), 149--169.
\url{https://doi.org/10.1257/jep.14.2.149}

\end{thebibliography}
\end{document}